\documentclass[11pt]{article}
\usepackage[utf8]{inputenc}
\usepackage[T1]{fontenc}
\usepackage[margin=1in]{geometry}
\usepackage{amsmath,amssymb,amsthm}
\usepackage{enumitem}
\setlist{nosep,leftmargin=1.7em}
\usepackage{xcolor}
\definecolor{linkcol}{rgb}{0.10,0.15,0.55}
\definecolor{citecol}{rgb}{0.00,0.35,0.15}
\usepackage[colorlinks=true,linkcolor=linkcol,citecolor=citecol,urlcolor=linkcol,breaklinks=true]{hyperref}
\usepackage[capitalise,nameinlink,noabbrev]{cleveref}

\newtheorem{theorem}{Theorem}[section]
\newtheorem{lemma}[theorem]{Lemma}
\newtheorem{proposition}[theorem]{Proposition}

\DeclareMathOperator{\supp}{supp}
\newcommand{\ip}[2]{\langle #1,#2\rangle}
\newcommand{\XX}{\mathcal{X}}
\newcommand{\YY}{\mathcal{Y}}
\newcommand{\PP}{\mathcal{P}}
\newcommand{\Ent}{\mathrm{H}}
\newcommand{\etafix}{\eta_{\mathrm{fix}}}
\newcommand{\poly}{\mathrm{poly}(n)}
\newcommand{\bh}{\mathrm{h}}
\newcommand{\A}{\mathsf A}
\newcommand{\B}{\mathsf B}
\newcommand{\C}{\mathsf C}

\title{\vspace{-2.5em}An $O\big((5/3)^n\mathrm{poly}(n)\big)$ One-Sided Monte Carlo Algorithm for \textsc{Equal Subset Sum}}
\author{Lixi Ye\thanks{Institute for Interdisciplinary Information Sciences, Tsinghua University, Beijing, China.}\\
{\small\nolinkurl{ylx25@mails.tsinghua.edu.cn}}}
\date{\today}

\begin{document}
\renewcommand{\thefootnote}{\fnsymbol{footnote}}
\setcounter{footnote}{1}
\maketitle
\setcounter{footnote}{0}
\renewcommand{\thefootnote}{\arabic{footnote}}
\vspace{-3em}

\begin{abstract}\noindent\small
We give a randomised algorithm for \textsc{Equal Subset Sum} that, on $n$ arbitrary integers of at most $m\le2^n$ bits, runs in time $O\big((5/3)^n\mathrm{poly}(n)+n^2m\big)$, never outputs a non-solution, and outputs a solution with probability $1-2^{-\Omega(n)}$ whenever one exists.
\end{abstract}

\section{The problem and the algorithm}\label{sec:alg}

An instance of \textsc{Equal Subset Sum} is a vector $x\in\mathbb Z^{n}$ whose entries have at most $m$ bits; a \emph{solution} is a pair of distinct sets $S\ne T\subseteq[n]$ with $\sum_{i\in S}x_i=\sum_{i\in T}x_i$. For positive entries the two sets may be taken disjoint and nonempty, by cancelling their intersection, which is the classical formulation of Woeginger and Yu \cite{WY92}, who showed the problem to be NP-complete. The folklore meet-in-the-middle algorithm, in the style of Horowitz and Sahni \cite{HS74}, solves it in time $3^{n/2}\poly\le1.7321^n\poly$, and improving on this was raised as an open problem by Woeginger \cite{Woe08}. Mucha, Nederlof, Pawlewicz and Węgrzycki \cite{MNPW19} gave an $O(1.7088^n)$ time Monte Carlo algorithm, by transferring the representation technique of Howgrave-Graham and Joux \cite{HJ10} from average-case to worst-case inputs; Chen, Jin, Randolph and Servedio \cite{CJRS22} developed it for subset balancing problems in the average case; Randolph and Węgrzycki \cite{RW26} improved the worst-case bound to $O(1.7067^n)$; and Yamano and Shibuya \cite{YS26} to $O(1.6994^n)$ time and $O(1.5664^n)$ space, the fastest bound currently known. In \cite{MNPW19} a solution is written as a difference of two vectors of $\{0,1\}^n$, so that a matched pair automatically has its difference in $\{-1,0,1\}^n$; \cite{RW26} gains further representations by letting both lists take values in $\{0,1,2\}$, at the cost that a matched pair need not have that property, and recovers the valid pairs by a separate compatibility testing procedure \cite[Section 8]{RW26}. The construction below also uses lists over $\{0,1,2\}$, but assigns each coordinate one of three random labels that restrict the two lists there so that every matched pair has its difference in $\{-1,0,1\}^n$ by construction; it pays for the assignment rather than for compatibility testing. We obtain the bound $(5/3)^n\poly$, where $5/3=1.6666\ldots$ (\cref{thm:main}).

Below, $\poly$ denotes an unspecified polynomial in $n$, possibly different at each occurrence, $q(n)=n^{c_0}$ is one fixed polynomial with $c_0$ a sufficiently large absolute constant, and $n$ is assumed to exceed a suitable constant. For $D\in\{-1,0,1\}^n$ and $y\in\mathbb Z^n$ we write $\ip{D}{y}=\sum_iD_iy_i$, and $\supp(D)$ for the coordinates where $D_i\ne0$. A \emph{halving} of $[n]$ is an ordered partition $[n]=H_1\dot\cup H_2$ with $|H_1|=\lfloor n/2\rfloor$ and $|H_2|=\lceil n/2\rceil$; a uniform halving is drawn uniformly among these. A uniform prime of an interval $[K,2K]$ with $K=2^{\Theta(n)}$ is drawn by sampling uniform integers of the interval and testing each for primality in $\poly$ time, giving up after $q(n)$ samples; the primes have density $\Omega(1/n)$ there by the prime number theorem, so the sampler gives up with probability $2^{-\Omega(n)}$, and conditioned on returning, its prime is uniform. A procedure whose sampler gives up abandons that iteration; \textsc{ESS} outputs \textsf{NO} if that of \cref{lem:reduce} does. Every hash table below stores exact integer keys and resolves collisions by exact comparison; sorting instead gives the same bounds with no assumption on the hash function, at one further factor of $\poly$.

The exponent we are aiming for is a function of $n$ alone, so the entries must first be brought down to $O(n)$ bits. If $m>2^n$ then $n<\log_2m$, and exhaustive search over $\{-1,0,1\}^n$ costs $3^n\cdot O(nm)=O(m^{1+\log_23}\log_2m)$ bit operations, polynomial in the input length; we therefore assume $m\le2^n$ from now on. The reduction we use is that of \cite[Appendix A of the full version]{MNPW19}, restated as \cite[Section 2]{YS26}; we prove the form we need, which applies to arbitrary integers and which our algorithm both calls and checks. Each entry is replaced by its residue modulo a random prime $P$ of $4n+1$ bits, which preserves equalities of subset sums only as congruences modulo $P$. To turn the congruences back into equalities, $\lceil\log_2n\rceil$ auxiliary entries $P,2P,4P,\dots$ are appended, whose signed subset sums realise every multiple $rP$ with $|r|<n$ that can arise, so the leftover is always absorbed. The prime is random because the converse direction is the fragile one: a solution of the reduced instance need not be one of the original, and $P$ is chosen so that no such false pair exists except with probability $2^{-\Omega(n)}$. The algorithm checks the pair it recovers against the original instance before printing it, so that failure of this event costs a solution rather than a wrong output.

\begin{lemma}\label{lem:reduce}
Let $x\in\mathbb Z^n$ have entries of at most $m\le2^n$ bits, put $k=\lceil\log_2n\rceil$, let $P$ be a uniform prime of $[2^{4n},2^{4n+1}]$, and let $\widetilde x\in\mathbb Z_{\ge0}^{\,n+k}$ be given by $\widetilde x_i=x_i\bmod P$ for $i\le n$ and $\widetilde x_{n+j}=2^{\,j-1}P$ for $1\le j\le k$. Then $\widetilde x$ has entries of at most $4n+k+1$ bits and is computed in $O(n^2m+\poly)$ bit operations, and
\begin{enumerate}[label=(\alph*), ref=\thetheorem(\alph*)]
\item\label{lem:reduce-a} if $x$ has a solution, then so has $\widetilde x$;
\item\label{lem:reduce-b} with probability $1-2^{-\Omega(n)}$ over $P$, the pair $(S\cap[n],\,T\cap[n])$ is a solution of $x$ for every solution $(S,T)$ of $\widetilde x$.
\end{enumerate}
\end{lemma}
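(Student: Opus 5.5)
The plan is to treat the three claims separately: the size and cost bounds, the forward direction (a) by an explicit carry construction, and the converse (b) by a union bound over the $3^n$ signed vectors.

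\emph{Size and cost.} Since $P<2^{4n+1}$, each residue $\widetilde x_i$ has at most $4n+1$ bits. Each auxiliary entry $2^{j-1}P$ with $j\le k$ is below $2^{4n+k+1}$. Drawing $P$ costs $\poly$ by the sampler described above. Reducing one $m$-bit integer modulo an $O(n)$-bit $P$ by schoolbook long division costs $O(mn)$ bit operations, so the $n$ entries together cost $O(n^2m)$.

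\emph{Part (a).} Encode a solution $(S,T)$ of $x$ as $D=\mathbf 1_S-\mathbf 1_T$, which is nonzero, lies in $\{-1,0,1\}^n$, and has $\ip Dx=0$.
\begin{itemize}
\item Write $x_i=\widetilde x_i+q_iP$. Then $\sum_iD_i\widetilde x_i=rP$ with $r=-\sum_iD_iq_i\in\mathbb Z$.
\item Since $0\le\widetilde x_i<P$ and at most $n$ coordinates contribute, $|rP|<nP$, so $|r|\le n-1\le 2^k-1$.
\item Write $|r|$ in binary as $\sum_{j\in J}2^{j-1}$ with $J\subseteq[k]$. Set $E_j=\operatorname{sign}(r)$ for $j\in J$ and $E_j=0$ otherwise.
\item The vector $D'=(D,-E)\in\{-1,0,1\}^{n+k}$ is nonzero and satisfies $\ip{D'}{\widetilde x}=rP-rP=0$.
\end{itemize}
Taking $S'$ and $T'$ as the positive and negative supports of $D'$ gives distinct sets with equal sums, i.e.\ a solution of $\widetilde x$.

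\emph{Part (b).} Let $(S,T)$ be any solution of $\widetilde x$, write $D'=\mathbf 1_S-\mathbf 1_T\ne0$, and let $D$ be its restriction to $[n]$ and $E$ its restriction to the auxiliary coordinates.
\begin{itemize}
\item First, $D\ne0$. Otherwise $\sum_jE_j2^{j-1}P=0$ with $E\ne0$, which is impossible: the term with the largest $j$ in $\supp(E)$ exceeds the sum of all smaller powers in absolute value.
\item Hence $S\cap[n]\ne T\cap[n]$, and it remains to show $\ip Dx=0$.
\item Reducing modulo $P$ gives $\ip Dx\equiv\ip D{\widetilde x}\equiv0\pmod P$.
\end{itemize}
So (b) can fail only on the bad event that some $D\in\{-1,0,1\}^n$ has $\ip Dx\ne0$ but $P\mid\ip Dx$.

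\emph{Main obstacle: bounding the bad event.} This counting step is where the real work lies, and I would handle it by a union bound.
\begin{itemize}
\item For a fixed $D$, the nonzero integer $\ip Dx$ has absolute value at most $n2^{m}\le n2^{2^n}$.
\item Every candidate prime exceeds $2^{4n}$, so $\ip Dx$ has at most $(2^n+\log_2n)/(4n)$ prime divisors in $[2^{4n},2^{4n+1}]$.
\item Summing over the $3^n$ vectors $D$, at most $3^n\cdot 2^{n+1}/(4n)$ primes are bad.
\item By the prime number theorem the interval contains $\Omega(2^{4n}/n)$ primes.
\item Since the sampler's output, conditioned on returning, is uniform among these primes, the bad probability is $O\big(6^n/16^n\big)=2^{-\Omega(n)}$.
\end{itemize}
This is the step where the bound $m\le2^n$ and the choice of $4n$-bit primes are used: a smaller prime, say around $2^{2n}$, would not beat the $6^n$ union bound.
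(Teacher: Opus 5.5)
Your proposal is correct and is essentially the paper's proof. It uses the same binary carry with $|r|\le n-1\le 2^k-1$ for (a), and for (b) the same observation that distinct subsets of the auxiliary entries have distinct sums. That is followed by the same union bound over the $3^n$ values $\ip{D}{x}$, each with at most $(m+\log_2 n)/4n$ prime divisors of size at least $2^{4n}$, against $\Omega(2^{4n}/n)$ primes in the interval; the only difference is that you phrase it through signed vectors in $\{-1,0,1\}^{n+k}$ rather than through the sets $S$ and $T$ directly.
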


\begin{proof}
Every entry is smaller than $2^{k}P<2^{4n+k+1}$. Drawing $P$ costs $\poly$, and dividing an $m$-bit integer by a $(4n+1)$-bit one costs $O(nm)$ bit operations by long division, so $\widetilde x$ costs $O(n^2m+\poly)$ in all; fast division would give $O\big(n^2+nm\log^{O(1)}m\big)$, near-linear in the combined input and output length. Write $\Sigma_A=\sum_{i\in A}x_i$ and $\widetilde\Sigma_A=\sum_{i\in A}\widetilde x_i$.

(a) Let $S\ne T$ be subsets of $[n]$ with $\Sigma_S=\Sigma_T$. Then $\widetilde\Sigma_S-\widetilde\Sigma_T\equiv\Sigma_S-\Sigma_T=0\pmod P$, while both terms lie in $[0,nP)$, so $\widetilde\Sigma_S-\widetilde\Sigma_T=rP$ for an integer $|r|\le n-1\le2^{k}-1$. Write $|r|=\sum_{j\in A}2^{\,j-1}$ for a set $A\subseteq\{1,\dots,k\}$ and adjoin $\{n+j:j\in A\}$ to $T$ if $r>0$ and to $S$ if $r<0$. The two sets still differ inside $[n]$, so are distinct, and their $\widetilde x$-sums now agree.

(b) Call $P$ \emph{good} if it divides no nonzero integer of the form $\Sigma_A-\Sigma_B$ with $A,B\subseteq[n]$. Each such integer is $\ip{D}{x}$ for the vector $D\in\{-1,0,1\}^n$ that is $+1$ on $A\setminus B$ and $-1$ on $B\setminus A$, so at most $3^n$ values arise, each of absolute value below $n2^m$ and hence with at most $(m+\log_2n)/4n$ prime divisors that are $2^{4n}$ or larger. As $[2^{4n},2^{4n+1}]$ contains at least $2^{4n}/4n$ primes, a fixed value is divisible by $P$ with probability at most $(m+\log_2n)2^{-4n}\le2^{\,n+1-4n}$, using $m\le2^n$; so $P$ is bad with probability at most $3^n2^{\,n+1-4n}=2^{-\Omega(n)}$.

Let now $P$ be good and let $(S,T)$ be a solution of $\widetilde x$, and put $S_0=S\cap[n]$ and $T_0=T\cap[n]$. The auxiliary entries are $P$ times distinct powers of two, so distinct subsets of them have distinct sums; were $S_0=T_0$, the sets $S\setminus[n]$ and $T\setminus[n]$ would have equal $\widetilde x$-sums and $S=T$ would follow. Hence $S_0\ne T_0$. Moreover $\widetilde\Sigma_{S_0}-\widetilde\Sigma_{T_0}$ is the $\widetilde x$-sum of $T\setminus[n]$ minus that of $S\setminus[n]$, a multiple of $P$, and $\Sigma_{S_0}-\Sigma_{T_0}\equiv\widetilde\Sigma_{S_0}-\widetilde\Sigma_{T_0}\pmod P$, so $\Sigma_{S_0}=\Sigma_{T_0}$ because $P$ is good.
\end{proof}

By \cref{lem:reduce} the search may be carried out on $n+O(\log n)$ nonnegative integers of $O(n)$ bits, on which an arithmetic operation costs $\poly$ time. Since $(5/3)^{n+O(\log n)}=(5/3)^n\poly$, the exponential factor is unaffected, and the reduction contributes the additive $O(n^2m)$ of \cref{thm:main}. Accordingly \textsc{Core} and its analysis are stated for $n$ integers of $O(n)$ bits, and \textsc{ESS} applies it to $\widetilde x$.

\begin{lemma}\label{lem:enc}
Let $x\in\mathbb Z^{n}$.
\begin{enumerate}[label=(\alph*), ref=\thetheorem(\alph*)]
\item\label{lem:enc-a} If $D\in\{-1,0,1\}^n$ is nonzero and $\ip{D}{x}=0$, then $\{i:D_i=1\}$ and $\{i:D_i=-1\}$ are a solution of $x$; conversely, if $S\ne T$ have equal sums, then the vector that is $+1$ on $S\setminus T$, $-1$ on $T\setminus S$ and $0$ elsewhere is nonzero and annihilates $x$.
\item\label{lem:enc-b} If $x\in\mathbb Z_{>0}^n$, the two sets produced in (a) are moreover nonempty.
\end{enumerate}
\end{lemma}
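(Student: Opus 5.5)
The plan is to unwind the definitions directly; no earlier result is needed. For (a), let $D$ be nonzero with $\ip{D}{x}=0$. Put $S=\{i:D_i=1\}$ and $T=\{i:D_i=-1\}$. These sets are disjoint. Since $D\ne0$, at least one of them is nonempty, so $S\ne T$. Then
\[
\ip{D}{x}=\sum_{i\in S}x_i-\sum_{i\in T}x_i=0,
\]
so the two sums agree and $(S,T)$ is a solution.

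For the converse, take $S\ne T$ with equal sums. Let $D$ be $+1$ on $S\setminus T$, $-1$ on $T\setminus S$, and $0$ elsewhere. Because $S\ne T$, the symmetric difference is nonempty, so $D\ne0$. The contributions of $S\cap T$ cancel, which gives
\[
\ip{D}{x}=\sum_{i\in S\setminus T}x_i-\sum_{i\in T\setminus S}x_i=\sum_{i\in S}x_i-\sum_{i\in T}x_i=0 .
\]

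For (b), suppose all $x_i>0$ and, for contradiction, that one of the sets from (a), say $T$, is empty. Then $S$ is nonempty, because $D\neq 0$. This forces
\[
\sum_{i\in S}x_i=\sum_{i\in T}x_i=0,
\]
which contradicts positivity, since a nonempty sum of positive integers is positive. The case $S=\emptyset$ is symmetric.

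The same argument covers the sets $S\setminus T$ and $T\setminus S$ in the converse direction. They are not both empty, and they have equal sums, so neither can be empty. I see no real obstacle here. The only point needing care is that distinctness in (a) uses $D\ne0$ and not disjointness alone, since two empty sets are disjoint but not distinct.
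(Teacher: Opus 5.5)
Your proof is correct and follows the paper's argument essentially step for step. Distinctness comes from $D\ne0$, the converse comes from cancelling $S\cap T$, and in (b) positivity forces the other set to be empty, contradicting $D\ne0$. Your extra remark covering $S\setminus T$ and $T\setminus S$ in (b) is a harmless addition that the paper leaves implicit.
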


\begin{proof}
(a) The two sets are disjoint, they are distinct because $D\ne0$, and their difference of sums is $\ip{D}{x}=0$. Conversely the displayed vector is nonzero because $S\ne T$, and cancelling $\sum_{i\in S\cap T}x_i$ turns $\sum_{i\in S}x_i=\sum_{i\in T}x_i$ into $\ip{D}{x}=0$. (b) If one of the two disjoint sets were empty, the other would have sum $0$, which forces it to be empty as well because $x_i>0$, i.e. $D=0$.
\end{proof}

\begin{lemma}\label{lem:sign}
Let $\varepsilon\in\{\pm1\}^n$ and $y_i=\varepsilon_ix_i$. Then $D\mapsto(\varepsilon_iD_i)_i$ is a bijection, preserving supports, from $\{D\in\{-1,0,1\}^n\setminus\{0\}:\ip{D}{x}=0\}$ onto $\{D\in\{-1,0,1\}^n\setminus\{0\}:\ip{D}{y}=0\}$.
\end{lemma}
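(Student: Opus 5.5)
Write $\phi_\varepsilon(D)=(\varepsilon_iD_i)_i$. The proof is a direct verification that $\phi_\varepsilon$ does what the statement claims.

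First, $\phi_\varepsilon$ maps $\{-1,0,1\}^n$ to itself. Since $\varepsilon_i^2=1$, we have $\phi_\varepsilon\circ\phi_\varepsilon=\mathrm{id}$, so $\phi_\varepsilon$ is an involution and in particular a bijection of $\{-1,0,1\}^n$. As $\varepsilon_i\ne0$, the $i$th entry of $\phi_\varepsilon(D)$ vanishes exactly when $D_i=0$. Hence $\supp(\phi_\varepsilon(D))=\supp(D)$, and in particular $\phi_\varepsilon(D)\ne0$ if and only if $D\ne0$.

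Second, the key identity is
\[
\ip{\phi_\varepsilon(D)}{y}=\sum_i\varepsilon_iD_i\varepsilon_ix_i=\sum_iD_ix_i=\ip{D}{x},
\]
again using $\varepsilon_i^2=1$. So $D$ annihilates $x$ exactly when $\phi_\varepsilon(D)$ annihilates $y$. Together with the nonzero-preservation above, $\phi_\varepsilon$ maps the first set into the second.

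Third, the symmetric argument gives the reverse map. Apply the same reasoning to $y$ with the same $\varepsilon$, noting that $x_i=\varepsilon_iy_i$. This shows $\phi_\varepsilon$ maps the second set into the first. Since $\phi_\varepsilon$ is its own inverse, its restriction is a bijection between the two sets, and it preserves supports by the first step. No step presents a real obstacle; the only point to state explicitly is $\varepsilon_i^2=1$, which is used both for the identity and for the involution property.
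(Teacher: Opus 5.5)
Your proof is correct and matches the paper's own argument: the identity $\sum_i\varepsilon_iD_i\cdot\varepsilon_ix_i=\ip{D}{x}$, the involution property coming from $\varepsilon_i^2=1$, and support preservation from $\varepsilon_i\ne0$. Your third step is redundant, since the identity already gives an equivalence and the involution supplies the inverse, but it does no harm.
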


\begin{proof}
$\sum_i\varepsilon_iD_i\cdot\varepsilon_ix_i=\ip{D}{x}$, the map is an involution, and $\varepsilon_i\ne0$.
\end{proof}

This re-randomisation, and the reduction to a solution of minimum support with balanced signs that it supports in \cref{ssec:balanced}, are those of \cite[Section 4.1]{RW26}. The algorithm is driven by the number $z$ of zero coordinates of such a solution, whose density $p=z/n$ selects one of three branches; guessing this density, and treating solutions of unbalanced profile separately, follows \cite{MNPW19,RW26,YS26}.

\smallskip
\noindent\fbox{\begin{minipage}{0.945\textwidth}\small
\textbf{ESS$(x)$.} Compute the instance $\widetilde x$ of \cref{lem:reduce} and run \textsc{Core}$(\widetilde x)$. If it returns a vector whose restriction $D$ to $[n]$ satisfies $D\ne0$ and $\ip{D}{x}=0$, output the pair $\big(\{i:D_i=1\},\{i:D_i=-1\}\big)$ and halt; otherwise output \textsf{NO}.
\end{minipage}}

\smallskip
\noindent\fbox{\begin{minipage}{0.945\textwidth}\small
\textbf{\textsc{Core}$(x)$.} For $z=0,1,\dots,n-1$, put $p:=z/n$ and run exactly one branch:
\[ p\le\tfrac17\ \text{or}\ p\ge\tfrac23:\ \textsc{Mitm}(x,z);\qquad \tfrac17<p<\tfrac27:\ \textsc{Central}(x,z);\qquad \tfrac27\le p<\tfrac23:\ \textsc{Collide}(x,z). \]
Whenever a branch returns a vector $D$, check that $D\ne0$ and $\ip{D}{x}=0$; if the check passes, return $D$ and halt. If every $z$ fails, return \textsf{NO}.
\end{minipage}}

\smallskip
For an alphabet $\Sigma$ and a length $N$, a \emph{type} is a vector $\nu\in\mathbb Z_{\ge0}^{\Sigma}$ with $\sum_\sigma\nu(\sigma)=N$; a string of length $N$ \emph{has type $\nu$} if each $\sigma$ occurs exactly $\nu(\sigma)$ times, and the number of such strings is the multinomial coefficient $\binom{N}{\nu}$.

\smallskip
\noindent\fbox{\begin{minipage}{0.945\textwidth}\small
\textbf{\textsc{Mitm}$(x,z)$.} Repeat $q(n)$ times: draw a uniform halving $[n]=H_1\dot\cup H_2$. For each of the $\poly$ pairs $(\nu^1,\nu^2)$ of types on $\{-1,0,1\}$, of lengths $|H_1|$ and $|H_2|$, that agree entrywise up to $1$ and satisfy $\nu^1(0)+\nu^2(0)=z$: hash the keys $\ip{D^1}{x|_{H_1}}$ over all $D^1\in\{-1,0,1\}^{H_1}$ of type $\nu^1$, then query the key $-\ip{D^2}{x|_{H_2}}$ for every $D^2\in\{-1,0,1\}^{H_2}$ of type $\nu^2$, and return any hit $D=(D^1,D^2)\ne0$.
\end{minipage}}

\smallskip
\noindent\fbox{\begin{minipage}{0.945\textwidth}\small
\textbf{\textsc{Collide}$(x,z)$.} Let $w=\min\{z,\,n-z-1\}$ and fix any halving $H_1,H_2$. Repeat $q(n)$ times: draw a uniform prime $P\in[2^{w},2^{w+1}]$ and a uniform residue $\rho$ modulo $P$; bucket every $S\subseteq H_h$ by $\sum_{i\in S}x_i \bmod P$, storing only nonempty buckets; compute $N=\sum_r|\mathcal B^1_r|\,|\mathcal B^2_{(\rho-r)\bmod P}|$ and \textbf{abort} the iteration if $N>q(n)2^{\,n-w}$; otherwise materialise $\{S\subseteq[n]:\sum_{i\in S}x_i\equiv\rho\}$ by joining complementary buckets, hash its members by their exact sum, and on a collision $S\ne S'$ return the vector that is $+1$ on $S\setminus S'$, $-1$ on $S'\setminus S$ and $0$ elsewhere.
\end{minipage}}

\smallskip
For the third branch, fix $p=z/n\in(\tfrac17,\tfrac27)$ and set
\begin{equation}\label{eq:par}
\alpha=\frac{9p-1}{2},\qquad \beta=\frac{3(1-3p)}{4},\qquad \kappa=\frac{\alpha}{3},\qquad \etafix=2\log_23-\frac{16}{9},
\end{equation}
so that $\alpha,\beta>0$ and $\alpha+2\beta=1$. Write $\bh(p)=-p\log_2p-(1-p)\log_2(1-p)$, with the convention $0\log_20=0$ in force throughout, and let
\begin{equation}\label{eq:theta}
\Theta=\begin{array}{c|ccc} & \A & \B & \C\\\hline -1 & \kappa & 4\beta/9 & 4\beta/9\\ \ \ 0 & \kappa & \ \beta/9 & \ \beta/9\\ +1 & \kappa & 4\beta/9 & 4\beta/9 \end{array}
\end{equation}
be an array indexed by $\{-1,0,1\}\times\{\A,\B,\C\}$. Its column sums are $3\kappa=\alpha$, $\beta$ and $\beta$, hence $\alpha+2\beta=1$ in total; its row sums are $\kappa+\tfrac{8\beta}9=\tfrac{9p-1}6+\tfrac{2(1-3p)}3=\tfrac{1-p}2$ for the rows $\pm1$ and $\kappa+\tfrac{2\beta}9=\tfrac{9p-1}6+\tfrac{1-3p}6=p$ for the row $0$. So $\Theta$ is a probability distribution with column marginal $(\alpha,\beta,\beta)$ and row marginal $\big(\tfrac{1-p}2,p,\tfrac{1-p}2\big)$, and it is invariant both under interchanging the columns $\B$ and $\C$ and under interchanging the rows $+1$ and $-1$.

Coordinates carry a label in $\{\A,\B,\C\}$ and a half index in $\{1,2\}$, so a run of the branch is organised around an ordered partition $\Pi=(\Pi_{L,h})_{L,h}$ of $[n]$ into six blocks, with $H_h=\Pi_{\A,h}\cup\Pi_{\B,h}\cup\Pi_{\C,h}$. A \emph{layout} is a family $\nu=(\nu_{L,h})_{L,h}$ of types on $\{-1,0,1\}$ with $\sum_{L,h}\sum_d\nu_{L,h}(d)=n$ and $\sum_{L,h}\nu_{L,h}(0)=z$; it is \emph{admissible} if
\begin{equation}\label{eq:adm}
\Big|\nu_{L,h}(d)-\tfrac n2\,\Theta(d,L)\Big|\le4\qquad\text{for all }d,L,h .
\end{equation}
For each $z$ there are $O(1)$ admissible layouts and \textsc{Central} enumerates all of them. A partition $\Pi$ is \emph{compatible} with $\nu$ if $|\Pi_{L,h}|=\sum_d\nu_{L,h}(d)$ for all $L,h$. Given a layout, put
\begin{equation}\label{eq:gK}
K=\prod_{h=1,2}\binom{\nu_{\A,h}(0)}{\lfloor\nu_{\A,h}(0)/2\rfloor},
\end{equation}
and given in addition a compatible $\Pi$, let $\XX(\Pi,\nu)$ be the set of all $X\in\{0,1,2\}^n$ such that for $h=1,2$,
\begin{equation}\label{eq:lists}
\begin{aligned}
&X\ \text{restricted to}\ \Pi_{\A,h}\ \text{has type}\ \big(\nu_{\A,h}(-1)+\lfloor\nu_{\A,h}(0)/2\rfloor,\ \ \nu_{\A,h}(+1)+\lceil\nu_{\A,h}(0)/2\rceil,\ \ 0\big),\\
&X\ \text{restricted to}\ \Pi_{\B,h}\ \text{has type}\ \big(\nu_{\B,h}(-1),\ \nu_{\B,h}(0),\ \nu_{\B,h}(+1)\big),\\
&X\ \text{is constant}\ 1\ \text{on}\ \Pi_{\C,h},
\end{aligned}
\end{equation}
where types on $\{0,1,2\}$ are written as triples. Let $\widetilde\Pi$ be $\Pi$ with the blocks $\Pi_{\B,h}$ and $\Pi_{\C,h}$ interchanged for $h=1,2$, and let $\widetilde\nu$ be $\nu$ with the same interchange applied and with the counts of $+1$ and of $-1$ interchanged inside every $\nu_{L,h}$. Since $\Theta$ is invariant under both interchanges, $\widetilde\nu$ is again admissible and $\widetilde\Pi$ is compatible with it, and it leaves every $\nu_{\A,h}(0)$ unchanged. Set
\begin{equation}\label{eq:Y}
\YY(\Pi,\nu)=\XX(\widetilde\Pi,\widetilde\nu),
\end{equation}
so that a $Y\in\YY(\Pi,\nu)$ takes values in $\{0,1\}$ on $\Pi_{\A,h}$, is constant $1$ on $\Pi_{\B,h}$, and has type $\big(\nu_{\C,h}(+1),\nu_{\C,h}(0),\nu_{\C,h}(-1)\big)$ on $\Pi_{\C,h}$. Finally, let $\XX^h$ and $\YY^h$ be the restrictions to $H_h$ of the members of $\XX(\Pi,\nu)$ and of $\YY(\Pi,\nu)$; as $\widetilde\Pi$ has the same halves as $\Pi$, $\XX(\Pi,\nu)=\XX^1\times\XX^2$ and $\YY(\Pi,\nu)=\YY^1\times\YY^2$. The filtering of these lists by a random prime, in steps \ref{step:prime} to \ref{step:hash} below, is the representation technique of \cite{HJ10}, in the form used in \cite{MNPW19,BCJ11,RW26}.

\smallskip
\noindent\fbox{\begin{minipage}{0.945\textwidth}\small
\textbf{\textsc{Central}$(x,z)$.} Repeat $q(n)$ times: draw a uniform $\varepsilon\in\{\pm1\}^n$ and set $y_i:=\varepsilon_ix_i$. For each admissible layout $\nu$, repeat
\[ q(n)\Big\lceil 2^{\,n\left(\bh(p)+1-p-\alpha\log_23-2\beta\etafix\right)}\Big\rceil \ \text{times:} \]
\begin{enumerate}[label=(\roman*), ref=(\roman*)]
\item\label{step:partition} draw a uniform ordered partition $\Pi$ of $[n]$ compatible with $\nu$;
\item\label{step:prime} draw a uniform prime $P\in[K,2K]$ and a uniform residue $\rho$ modulo $P$;
\item\label{step:bucket} for $h=1,2$, bucket $\XX^h$ and $\YY^h$ by $\ip{\cdot}{y|_{H_h}}\bmod P$, storing only nonempty buckets;
\item\label{step:count} compute $N_{\XX}=\sum_r|\XX^1_r|\,|\XX^2_{(\rho-r)\bmod P}|$ and $N_{\YY}$ likewise, and \textbf{abort} the iteration if $\max(N_\XX,N_\YY)>q(n)|\XX(\Pi,\nu)|/K$;
\item\label{step:hash} materialise $\XX_\rho=\{X\in\XX(\Pi,\nu):\ip{X}{y}\equiv\rho\}$ by joining complementary nonempty buckets, and hash its members by the exact integer $\ip{X}{y}$;
\item\label{step:query} materialise $\YY_\rho$ likewise and query it against the table; on a hit $(X,Y)$ return the vector $\big(\varepsilon_i(X_i-Y_i)\big)_i$.
\end{enumerate}
\end{minipage}}

\section{Correctness and running time}\label{sec:analysis}

\subsection{One-sided error}

\begin{lemma}\label{lem:sound}
Let $\nu$ be admissible and $\Pi$ compatible with it. Then $X-Y\in\{-1,0,1\}^n$ for all $X\in\XX(\Pi,\nu)$ and $Y\in\YY(\Pi,\nu)$, and $\XX(\Pi,\nu)\cap\YY(\Pi,\nu)=\varnothing$.
\end{lemma}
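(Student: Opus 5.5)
The plan is a coordinatewise case analysis over the six blocks $\Pi_{L,h}$. First I will pin down the value sets of $X$ and $Y$ on each label. By \eqref{eq:lists}, a member $X$ of $\XX(\Pi,\nu)$ takes values in $\{0,1\}$ on $\Pi_{\A,h}$, because the type there has third entry $0$. It takes values in $\{0,1,2\}$ on $\Pi_{\B,h}$ and is identically $1$ on $\Pi_{\C,h}$.

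Next I unwind \eqref{eq:Y}. Since $\widetilde\Pi$ swaps the $\B$ and $\C$ blocks and leaves $\A$ blocks alone, $Y\in\YY(\Pi,\nu)=\XX(\widetilde\Pi,\widetilde\nu)$ behaves as follows:
\begin{itemize}
\item it takes values in $\{0,1\}$ on $\widetilde\Pi_{\A,h}=\Pi_{\A,h}$;
\item it is constant $1$ on $\widetilde\Pi_{\C,h}=\Pi_{\B,h}$;
\item it takes values in $\{0,1,2\}$ on $\widetilde\Pi_{\B,h}=\Pi_{\C,h}$.
\end{itemize}
This is exactly the description given after \eqref{eq:Y}, and I will verify it directly from the definitions.

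With these value sets, $X_i-Y_i$ is computed label by label:
\begin{itemize}
\item on $\A$ coordinates both values lie in $\{0,1\}$, so $X_i-Y_i\in\{-1,0,1\}$;
\item on $\B$ coordinates $X_i\in\{0,1,2\}$ and $Y_i=1$, so $X_i-Y_i\in\{-1,0,1\}$;
\item on $\C$ coordinates $X_i=1$ and $Y_i\in\{0,1,2\}$, so $X_i-Y_i\in\{-1,0,1\}$.
\end{itemize}
Since the six blocks partition $[n]$, this proves the first claim. Admissibility is not needed here beyond guaranteeing that the types are well defined, i.e.\ nonnegative with the right lengths.

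For disjointness I will compare the number of coordinates equal to $1$ inside $\Pi_{\A,h}$. In $X$ this count is $\nu_{\A,h}(+1)+\lceil\nu_{\A,h}(0)/2\rceil$. For $Y$ I apply the same formula to $\widetilde\nu$, whose $\A$-types have the counts of $\pm1$ swapped and the same zero count, so the count is $\nu_{\A,h}(-1)+\lceil\nu_{\A,h}(0)/2\rceil$. If $X=Y$, these agree for both $h$, forcing $\nu_{\A,h}(+1)=\nu_{\A,h}(-1)$. That equality may well hold, so this comparison alone is insufficient; I expect this step to be the main obstacle.

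The fix is to use the $\B$ and $\C$ blocks. If $X=Y$, then on $\Pi_{\C,h}$ the vector $X$ is constant $1$, so $Y$ is too. But $Y$ has type $\big(\nu_{\C,h}(+1),\nu_{\C,h}(0),\nu_{\C,h}(-1)\big)$ there, which forces $\nu_{\C,h}(+1)=\nu_{\C,h}(-1)=0$ for both $h$. By admissibility \eqref{eq:adm}, $\nu_{\C,h}(\pm1)\ge\tfrac n2\cdot\tfrac{4\beta}{9}-4$. Here $\beta=\tfrac{3(1-3p)}4>\tfrac3{28}$ because $p<\tfrac27$, so this lower bound is positive for $n$ beyond a constant, which is a contradiction. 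Hence $\XX\cap\YY=\varnothing$. The same argument would work symmetrically on $\Pi_{\B,h}$, where $Y\equiv1$ forces $X\equiv1$. The only point to state carefully is that the standing assumption "$n$ exceeds a suitable constant" is being invoked.
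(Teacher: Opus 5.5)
Your proof is correct and matches the paper's: the same label-by-label case analysis gives $X-Y\in\{-1,0,1\}^n$. For disjointness you use the same device of a block on which one list is constant $1$ while admissibility forces the other to have $\Omega(n)$ entries different from $1$ (you take $\Pi_{\C,h}$ with $X\equiv1$, the paper takes the mirror-image block $\Pi_{\B,1}$ with $Y\equiv1$), so the abandoned comparison of the $\A$-block counts can simply be dropped.
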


\begin{proof}
On $\Pi_{\A,h}$ the third entry of the prescribed type is $0$ for both sets, so $X_i,Y_i\in\{0,1\}$. On $\Pi_{\B,h}$ we have $Y_i=1$ and $X_i\in\{0,1,2\}$, and on $\Pi_{\C,h}$ the reverse, so $X_i-Y_i\in\{-1,0,1\}$ in each case. For the second claim, every $Y$ is constant $1$ on $\Pi_{\B,1}$, whereas every $X$ has $\nu_{\B,1}(-1)+\nu_{\B,1}(+1)\ge\tfrac{4\beta n}{9}-8$ entries there that are not $1$, positive for large $n$ because $\beta\ge\tfrac3{28}$ on $[\tfrac17,\tfrac27]$.
\end{proof}

\begin{proposition}\label{prop:onesided}
\textsc{Core}$(x)$ returns either \textsf{NO} or a nonzero $D\in\{-1,0,1\}^n$ with $\ip{D}{x}=0$, and no branch can propose a vector that fails the check it performs. Consequently ESS never outputs a non-solution.
\end{proposition}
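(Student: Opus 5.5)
The first claim is immediate from the way \textsc{Core} is written: it returns a vector only after checking that $D\ne0$ and $\ip{D}{x}=0$, and otherwise returns \textsf{NO}. What needs proving is that every proposal satisfies $D\in\{-1,0,1\}^n$, $D\ne0$ and $\ip{D}{x}=0$, so that the check never fails. I will go through the three branches separately. For each, I show the returned vector has entries in $\{-1,0,1\}$, is nonzero, and annihilates $x$. Exact equality of integers is guaranteed because all hash tables compare exact keys.

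\textsc{Mitm}: a hit means $\ip{D^1}{x|_{H_1}}=-\ip{D^2}{x|_{H_2}}$ exactly, so $\ip{D}{x}=0$. Here $D^1,D^2$ have entries in $\{-1,0,1\}$, and $D\ne0$ is required explicitly before returning. \textsc{Collide}: a collision gives $S\ne S'$ with equal exact sums, and the second part of \cref{lem:enc-a} says the returned vector is nonzero, lies in $\{-1,0,1\}^n$, and annihilates $x$. The abort and bucketing steps only decide whether a collision is found; they do not affect its validity.

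\textsc{Central} is the branch with real content. A hit $(X,Y)$ has $X\in\XX_\rho\subseteq\XX(\Pi,\nu)$ and $Y\in\YY_\rho\subseteq\YY(\Pi,\nu)$ with $\ip{X}{y}=\ip{Y}{y}$ exactly. I need to confirm that the sets materialised by joining buckets are exactly the stated subsets, which holds because $\XX(\Pi,\nu)=\XX^1\times\XX^2$ and $\ip{X}{y}=\ip{X|_{H_1}}{y|_{H_1}}+\ip{X|_{H_2}}{y|_{H_2}}$. Put $E=X-Y$. By \cref{lem:sound}, $E\in\{-1,0,1\}^n$, and $E\ne0$ because $\XX\cap\YY=\varnothing$ (again \cref{lem:sound}). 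Also $\ip{E}{y}=0$, and the returned vector $D=(\varepsilon_iE_i)_i$ satisfies $\ip{D}{x}=\sum_i\varepsilon_iE_i x_i=\ip{E}{y}=0$, as in \cref{lem:sign}, with the same support, so it is nonzero and has entries in $\{-1,0,1\}$. The main subtlety is this disjointness: without it the hit $X=Y$ would give $D=0$. That is why the second claim of \cref{lem:sound} is needed, and it relies on $n$ being large and on $\nu$ being admissible, which holds since \textsc{Central} enumerates only admissible layouts.

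Finally, ESS: \textsc{Core} runs on $\widetilde x$ and returns a vector that, by the above, is a valid nonzero annihilator of $\widetilde x$. ESS then restricts it to $[n]$ and checks $D\ne0$ and $\ip{D}{x}=0$ directly against the original instance. It prints a pair only if this check passes, and by \cref{lem:enc-a} such a pair is a solution of $x$. Otherwise it prints \textsf{NO}. So no non-solution is ever output, whether or not the prime $P$ of \cref{lem:reduce} was good.
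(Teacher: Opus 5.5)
Your proof is correct and follows the paper's argument essentially step for step. The checks performed by \textsc{Core} and by ESS give the first and last claims, and the branch-by-branch verification uses \cref{lem:sound} and \cref{lem:sign} for \textsc{Central}, the construction itself for \textsc{Mitm}, and Lemma~\ref{lem:enc-a} for \textsc{Collide}. You are also a little more explicit than the paper in noting that the ternary form of \textsc{Core}'s output is what lets Lemma~\ref{lem:enc-a} apply to the restriction that ESS checks.
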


\begin{proof}
\textsc{Core} verifies $D\ne0$ and $\ip{D}{x}=0$ before returning, and ESS verifies the same two conditions against the original instance before any output, so its output is a solution by \hyperref[lem:enc-a]{Lemma~\ref*{lem:enc-a}}, whatever the reduction and the branches do. For the second claim: a hit in \textsc{Central} has $X\ne Y$ and $X-Y\in\{-1,0,1\}^n$ by \cref{lem:sound}, and $\ip{X-Y}{y}=0$ by construction, so $(\varepsilon_i(X_i-Y_i))_i$ is nonzero and annihilates $x$ by \cref{lem:sign}; a hit in \textsc{Mitm} is by construction a nonzero $D\in\{-1,0,1\}^n$ with $\ip{D}{x}=0$; and a hit in \textsc{Collide} returns the vector attached by \hyperref[lem:enc-a]{Lemma~\ref*{lem:enc-a}} to two distinct subsets of equal sum.
\end{proof}

\subsection{A balanced solution of minimum support, and a layout matching it}\label{ssec:balanced}

Assume $x$ has a solution and fix a nonzero $D^\star\in\{-1,0,1\}^n$ of minimum support with $\ip{D^\star}{x}=0$. Let $z$ be its number of zero coordinates and $p=z/n$. Until \cref{ssec:other} we assume $p\in(\tfrac17,\tfrac27)$, the range in which \textsc{Central} is invoked, so that \eqref{eq:par} and \eqref{eq:theta} are defined. For a uniform $\varepsilon\in\{\pm1\}^n$ the signs of $\varepsilon_iD^\star_i$ on $\supp(D^\star)$ are independent and uniform, so with probability at least $c/\sqrt n$ for an absolute constant $c>0$ the vector $D:=(\varepsilon_iD^\star_i)_i$ satisfies
\begin{equation}\label{eq:bal}
\Big|\,\big|\{i:D_i=1\}\big|-\big|\{i:D_i=-1\}\big|\,\Big|\le1 .
\end{equation}
By \cref{lem:sign}, $D$ then annihilates $y=(\varepsilon_ix_i)_i$ and its support is again of minimum size among the nonzero vectors of $\{-1,0,1\}^n$ annihilating $y$. The $q(n)$ outer repetitions of \textsc{Central} realise \eqref{eq:bal} except with probability $2^{-\Omega(n)}$. In the rest of \cref{sec:analysis}, $D$ and $y$ denote these, and $n_d$ denotes $|\{i:D_i=d\}|$, so that $n_0=z$ and $|n_d-n\Theta_{d\bullet}|\le1$ for every $d$, where $\Theta_{d\bullet}$ is the $d$-th row sum of $\Theta$.

\begin{lemma}\label{lem:layout}
Some admissible layout $\nu$ satisfies $\sum_{L,h}\nu_{L,h}(d)=n_d$ for every $d\in\{-1,0,1\}$.
\end{lemma}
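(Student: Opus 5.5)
We construct $\nu$ explicitly by rounding. The task is to split the three integers $n_{-1},n_0,n_{+1}$ into six nonnegative integer pieces $\nu_{L,h}(d)$, one for each $(L,h)\in\{\A,\B,\C\}\times\{1,2\}$, so that for every $d$ the pieces sum to $n_d$ and each piece is within $4$ of its target $t_{L}(d):=\tfrac n2\Theta(d,L)$. For each fixed $d$ the six targets sum to $\tfrac n2\cdot 2\Theta_{d\bullet}=n\Theta_{d\bullet}$. From the standing assumption $|n_d-n\Theta_{d\bullet}|\le1$ we therefore get $|n_d-\sum_{L,h}t_L(d)|\le1$. Once the pieces are chosen, the layout conditions follow automatically: $\sum_{L,h}\sum_d\nu_{L,h}(d)=\sum_dn_d=n$, and $\sum_{L,h}\nu_{L,h}(0)=n_0=z$.

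\textbf{Rounding.} Fix $d$ and list the six targets in some order $t^{(1)},\dots,t^{(6)}$. Set $\nu^{(j)}=\lfloor t^{(1)}+\dots+t^{(j)}\rfloor-\lfloor t^{(1)}+\dots+t^{(j-1)}\rfloor$ for $j\le5$. Each such piece is an integer with $|\nu^{(j)}-t^{(j)}|<1$, and it is nonnegative because the targets are nonnegative. Then define the last piece as $\nu^{(6)}=n_d-\sum_{j\le5}\nu^{(j)}=n_d-\lfloor\sum_{j\le5}t^{(j)}\rfloor$. Its deviation from $t^{(6)}$ is at most $|n_d-\sum_jt^{(j)}|+1\le2$, which is well within the admissibility slack of $4$ in \eqref{eq:adm}.

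\textbf{Nonnegativity of the last piece.} This is the only delicate point. Every entry of $\Theta$ is a positive multiple of $\kappa$ or of $\beta$. For $p\in(\tfrac17,\tfrac27)$ we have $\beta\ge\tfrac3{28}$, but $\kappa=\alpha/3=(9p-1)/6$ tends to $0$ as $p\downarrow\tfrac17$. So a target $t_{\A}(d)=\tfrac n2\kappa$ may be smaller than $2$, and a naive last piece could come out negative. To avoid this, order the pieces so that the last one has a large target. Take it to be $(\B,2)$, whose target is at least $\tfrac n2\cdot\tfrac{\beta}{9}\ge\tfrac{n}{168}$; this is larger than $2$ for $n$ beyond a constant, so $\nu^{(6)}\ge t^{(6)}-2>0$. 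All pieces are then nonnegative, with the five earlier ones within $1$ of their targets and the last within $2$. Hence $\nu$ is a genuine family of types, it is admissible, and it has the required row sums $n_d$. We expect no real obstacle beyond this ordering choice.
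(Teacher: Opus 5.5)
Your proof is correct, and it takes a more elementary route than the paper's.

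The paper's argument works on the $3\times3$ array as a whole:
\begin{enumerate}
\item It fixes integer label sizes $c_L$ within $2$ of $n\Theta_{\bullet L}$.
\item It forms the real array $M^{*}_{dL}=n\Theta(d,L)+\tfrac13r_d+\tfrac13s_L$, whose row sums are $n_d$ and whose column sums are $c_L$.
\item It rounds $M^{*}$ to an integral array with the same margins, using total unimodularity of the transportation polytope.
\item It splits each entry into two halves differing by at most $1$.
\end{enumerate}
This controls the label sizes, keeps the two halves of each label balanced, and puts every entry within $\tfrac32$ of its target.

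You instead round each row $d$ on its own by cumulative floors, and put the single unit of row discrepancy into a cell with a large target. You ignore column sums entirely. That is legitimate: the statement asks only for \eqref{eq:adm} and the row sums, and \eqref{eq:adm} by itself already pins every block size to within $12$ of $\tfrac n2\Theta_{\bullet L}$. Your version avoids the integrality theorem for totally unimodular systems. The price is a deviation bound of $2$ instead of $\tfrac32$, which is still well inside the slack of $4$.

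One factual slip, which does not affect the argument: $\kappa=(9p-1)/6$ does not tend to $0$ as $p\downarrow\tfrac17$. It equals $\tfrac1{21}$ there, since $9p-1$ vanishes only at $p=\tfrac19$. In fact $\kappa\ge\beta/9\ge\tfrac1{84}$ throughout $[\tfrac17,\tfrac27]$, which is exactly what the paper uses for nonnegativity. So every target $\tfrac n2\Theta(d,L)$ is at least $n/168$, and any ordering of the six pieces would do for large $n$. Your choice of $(\B,2)$ as the last piece is harmless but unnecessary.
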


\begin{proof}
Choose integers $c_L$ summing to $n$ with $|c_L-n\Theta_{\bullet L}|\le2$, where $\Theta_{\bullet L}$ is the $L$-th column sum. Put $r_d=n_d-n\Theta_{d\bullet}$ and $s_L=c_L-n\Theta_{\bullet L}$, so that $|r_d|\le1$, $|s_L|\le2$ and $\sum_dr_d=\sum_Ls_L=0$. Then $M^{*}_{dL}=n\Theta(d,L)+\tfrac13r_d+\tfrac13s_L$ has row sums $(n_d)$ and column sums $(c_L)$, satisfies $\|M^{*}-n\Theta\|_\infty\le1$, and is nonnegative because every entry of $n\Theta$ is at least $\beta n/9$. The polytope of real arrays with those row and column sums and with $\lfloor M^{*}\rfloor\le M\le\lceil M^{*}\rceil$ is nonempty, its constraint matrix is totally unimodular and its data are integral, so it contains an integral point $M$ \cite[Chapter 19]{Schrijver86}, which satisfies $\|M-n\Theta\|_\infty\le2$. Split each $M_{dL}$ into two halves differing by at most $1$ to obtain $\nu_{L,h}$; these are within $\tfrac32$ of $\tfrac n2\Theta(d,L)$, so \eqref{eq:adm} holds and $\sum_{L,h}\nu_{L,h}(0)=n_0=z$.
\end{proof}

\subsection{The probability that a partition is good}

For a probability vector $\pi$ put $\Ent(\pi)=-\sum_\sigma\pi(\sigma)\log_2\pi(\sigma)$, in bits, and for a type $\nu$ of length $N\ge1$ write $\Ent(\nu)$ for $\Ent(\nu/N)$; thus $\bh$ is the case of two symbols. We use throughout the estimates
\begin{equation}\label{eq:tc}
\frac{2^{N\Ent(\nu)}}{(N+1)^{|\Sigma|}}\ \le\ \binom{N}{\nu}\ \le\ 2^{N\Ent(\nu)},\qquad\quad \binom{N}{\nu}\Big/\binom{N'}{\nu'}\in\Big[\frac{1}{\poly},\ \poly\Big],
\end{equation}
the first being the method of types \cite[Chapter 11]{CoverThomas06} and the second holding whenever $|N-N'|$ and $\|\nu-\nu'\|_\infty$ are $O(1)$, since altering one entry by $1$ changes a multinomial coefficient by a factor of at most $N+1$. Splitting each symbol of a type into two nearly equal halves adds one bit to $\Ent$ up to $O(\log n/n)$, hence a factor $\poly$ in \eqref{eq:tc}; we use this for the six blocks against the three labels.

Decomposing the entropy of the array $\Theta$ first along its columns and then along its rows,
\begin{equation}\label{eq:chain}
\begin{aligned}
-\sum_{d,L}\Theta(d,L)\log_2\Theta(d,L) &=\Ent(\alpha,\beta,\beta)+\sum_{L}\Theta_{\bullet L}\,\Ent\biggl(\frac{\Theta(\cdot,L)}{\Theta_{\bullet L}}\biggr)\\ &=\Ent\Big(\tfrac{1-p}2,\,p,\,\tfrac{1-p}2\Big)+\sum_{d}\Theta_{d\bullet}\,\Ent\biggl(\frac{\Theta(d,\cdot)}{\Theta_{d\bullet}}\biggr).
\end{aligned}
\end{equation}
The column $\A$ of $\Theta$ is uniform on the three rows and the columns $\B$ and $\C$ are proportional to $(\tfrac49,\tfrac19,\tfrac49)$, whose entropy is $\log_29-\tfrac{16}9=\etafix$; also $\Ent\Big(\tfrac{1-p}2,p,\tfrac{1-p}2\Big)=\bh(p)+1-p$. So the column-conditional term $\sum_{L}\Theta_{\bullet L}\Ent(\Theta(\cdot,L)/\Theta_{\bullet L})$ in \eqref{eq:chain} equals $\alpha\log_23+2\beta\etafix$ and \eqref{eq:chain} rearranges to
\begin{equation}\label{eq:filterexp}
\Ent(\alpha,\beta,\beta)-\sum_{d}\Theta_{d\bullet}\,\Ent\biggl(\frac{\Theta(d,\cdot)}{\Theta_{d\bullet}}\biggr)\;=\;\bh(p)+1-p-\alpha\log_23-2\beta\etafix,
\end{equation}
which is the exponent appearing in the repetition count of \textsc{Central}.

Call a partition $\Pi$ compatible with $\nu$ \emph{good} if for all $L$ and $h$ the restriction of $D$ to $\Pi_{L,h}$ has type $\nu_{L,h}$.

\begin{proposition}\label{prop:good}
Let $\nu$ be as in \cref{lem:layout} and let $\Pi$ be uniform among the partitions compatible with $\nu$. Then $\Pr[\Pi\ \text{is good}]\ \ge\ 2^{-n\left(\bh(p)+1-p-\alpha\log_23-2\beta\etafix\right)}\big/\poly$.
\end{proposition}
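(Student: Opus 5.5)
The plan is to compute the probability exactly as a ratio of counts and then estimate both counts with \eqref{eq:tc}. A uniform partition $\Pi$ compatible with $\nu$ is a uniform ordered partition of $[n]$ into six blocks of the prescribed sizes $|\Pi_{L,h}|=\sum_d\nu_{L,h}(d)$. There are $\binom{n}{(|\Pi_{L,h}|)_{L,h}}$ of them.

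I will count the good ones by conditioning on the sign pattern of $D$. By \cref{lem:layout}, $\sum_{L,h}\nu_{L,h}(d)=n_d$ for each $d$. A good partition therefore amounts to choosing, independently for each $d\in\{-1,0,1\}$, an ordered partition of the level set $\{i:D_i=d\}$ (of size $n_d$) into six pieces of sizes $(\nu_{L,h}(d))_{L,h}$. Conversely, every such choice gives a compatible partition whose restriction to $\Pi_{L,h}$ has type $\nu_{L,h}$. Hence
\[
\Pr[\Pi\ \text{good}]=\prod_{d}\binom{n_d}{(\nu_{L,h}(d))_{L,h}}\Big/\binom{n}{(|\Pi_{L,h}|)_{L,h}} .
\]

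Next, I replace each multinomial by one with entries exactly proportional to $\Theta$, at a cost of a $\poly$ factor. The denominator is at most $2^{n\Ent(\cdot)}$ of the six block sizes over $n$. These are within $O(1/n)$ of $(\alpha,\beta,\beta)/2$, each split into two halves, so by continuity of $\Ent$ and the splitting remark after \eqref{eq:tc}, the denominator is at most $2^{n(\Ent(\alpha,\beta,\beta)+1)}\poly$. Here I use that a vector perturbed by $O(1/n)$ in each coordinate changes $N\Ent$ by $O(\log n)$, which is the second part of \eqref{eq:tc}.

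For the numerator I use the lower bound of \eqref{eq:tc}. For each $d$, the types $(\nu_{L,h}(d))_{L,h}$ are within $4$ of $\tfrac n2\Theta(d,L)$, and $n_d$ is within $1$ of $n\Theta_{d\bullet}$. The factor for $d$ is therefore at least $2^{n\Theta_{d\bullet}(\Ent(\Theta(d,\cdot)/\Theta_{d\bullet})+1)}/\poly$. The $+1$ comes from the two halves. It contributes $\sum_d\Theta_{d\bullet}=1$ in total, which cancels the $+1$ in the denominator.

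Dividing, the exponent becomes $-n\big(\Ent(\alpha,\beta,\beta)-\sum_d\Theta_{d\bullet}\Ent(\Theta(d,\cdot)/\Theta_{d\bullet})\big)$, which by \eqref{eq:filterexp} is the claimed exponent.

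The main obstacle I expect is the rigorous replacement of the actual integer types by the idealised proportions. This needs an explicit estimate that $N\Ent(\nu/N)$ changes by $O(\log N)$ when $\nu$ moves by $O(1)$ in each coordinate and $N$ moves by $O(1)$. I would derive it directly from the ratio bound in \eqref{eq:tc}, by comparing with a nearby type whose entries are exact multiples, or more simply by noting $|\partial(N\Ent)/\partial\nu_\sigma|=|\log_2(N/\nu_\sigma)|=O(\log n)$. That derivative bound holds here because every entry of $n\Theta$ is at least $\beta n/9=\Omega(n)$ on the range of $p$, so all entries involved are $\Theta(n)$ and the logarithms stay bounded by $O(\log n)$.
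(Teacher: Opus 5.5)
Your proposal is correct and follows the paper's proof essentially step for step. You use the same exact ratio of multinomial coefficients, obtained by distributing each level set of $D$ among the six blocks, and the same application of \eqref{eq:tc}, with the extra bit coming from splitting each label into two halves. You then conclude with the same appeal to \eqref{eq:filterexp}.

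Your derivative bound $\partial(N\Ent)/\partial\nu_\sigma=\log_2(N/\nu_\sigma)$ makes explicit the rounding step that the paper leaves to \eqref{eq:adm} and \eqref{eq:tc}. Since every entry is $\Omega(n)$, this derivative is bounded by a constant, so the change is in fact $O(1)$ rather than $O(\log n)$.
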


\begin{proof}
A good $\Pi$ is obtained by distributing, independently for each $d$, the $n_d$ coordinates with $D_i=d$ among the six blocks with the prescribed counts $\nu_{L,h}(d)$, so the probability in question equals $\prod_d\binom{n_d}{(\nu_{L,h}(d))_{L,h}}$ divided by $\binom{n}{(|\Pi_{L,h}|)_{L,h}}$. By \eqref{eq:adm} and \eqref{eq:tc} the numerator is
\[ 2^{\,n\left(\sum_d\Theta_{d\bullet}\Ent\left(\Theta(d,\cdot)/\Theta_{d\bullet}\right)+1\right)}\big/\poly \]
and the denominator is at most $2^{\,n(\Ent(\alpha,\beta,\beta)+1)}\poly$, the extra bit in each being the split of each label into two nearly equal halves. The claim now follows from \eqref{eq:filterexp}.
\end{proof}

\subsection{Representations of $D$, and the distinctness of their sums}

\begin{lemma}\label{lem:reps}
Let $\Pi$ be good and put $\PP=\{(X,Y)\in\XX(\Pi,\nu)\times\YY(\Pi,\nu):X-Y=D\}$ and $F=\{i\in\Pi_{\A,1}\cup\Pi_{\A,2}:D_i=0\}$. Then $(X,Y)\mapsto\{i\in F:X_i=0\}$ is a bijection from $\PP$ onto the family of subsets $G\subseteq F$ with $|G\cap\Pi_{\A,h}|=\lfloor\nu_{\A,h}(0)/2\rfloor$ for $h=1,2$. Consequently $|\PP|=K=2^{\kappa n}/\poly$.
\end{lemma}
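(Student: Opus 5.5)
Fix $h$ and work block by block; since membership in $\XX(\Pi,\nu)$ and $\YY(\Pi,\nu)$ is a conjunction of conditions on the six blocks, and $D$ has the prescribed type $\nu_{L,h}$ on each block because $\Pi$ is good, it suffices to describe, for each block, the pairs of restrictions $(X|_{\Pi_{L,h}},Y|_{\Pi_{L,h}})$ with $X-Y=D$ there.

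On $\Pi_{\B,h}$ we have $Y\equiv1$, so $X=D+1$ is forced. This $X$ takes values in $\{0,1,2\}$ and has type $(\nu_{\B,h}(-1),\nu_{\B,h}(0),\nu_{\B,h}(+1))$ exactly because $D$ has type $\nu_{\B,h}$ there. Symmetrically, on $\Pi_{\C,h}$ we have $X\equiv1$, so $Y=1-D$ is forced and has type $(\nu_{\C,h}(+1),\nu_{\C,h}(0),\nu_{\C,h}(-1))$, as required of $\YY$.

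On $\Pi_{\A,h}$ both $X$ and $Y$ take values in $\{0,1\}$, so $X-Y=D$ forces $(X_i,Y_i)=(1,0)$ where $D_i=1$ and $(0,1)$ where $D_i=-1$, while where $D_i=0$ we have $X_i=Y_i\in\{0,1\}$ free. Let $G_h\subseteq F\cap\Pi_{\A,h}$ be the set of zeros of $D$ in this block where $X_i=0$. Then the number of zeros of $X$ on $\Pi_{\A,h}$ is $\nu_{\A,h}(-1)+|G_h|$, so the type condition in \eqref{eq:lists} for $X$ holds iff $|G_h|=\lfloor\nu_{\A,h}(0)/2\rfloor$; the ones count then matches automatically, since the block has $\sum_d\nu_{\A,h}(d)$ coordinates.

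It remains to check $Y$, which is the one step needing care: the condition comes from $\XX(\widetilde\Pi,\widetilde\nu)$, where $\widetilde\nu$ swaps the $\pm1$ counts. So $Y$ on $\Pi_{\A,h}$ must have $\nu_{\A,h}(+1)+\lfloor\nu_{\A,h}(0)/2\rfloor$ zeros. Its zeros are the $+1$ positions of $D$ together with $G_h$, which gives exactly that count. Thus the conditions on $Y$ impose nothing beyond $|G_h|=\lfloor\nu_{\A,h}(0)/2\rfloor$, and this is what makes the map a bijection rather than an injection.

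Conversely, any $G$ with the stated intersection sizes determines $(X,Y)$ uniquely by the forced values above, and that pair lies in $\PP$. So the map is a bijection, and the number of admissible $G$ is $\prod_h\binom{\nu_{\A,h}(0)}{\lfloor\nu_{\A,h}(0)/2\rfloor}=K$, using $|F\cap\Pi_{\A,h}|=\nu_{\A,h}(0)$ by goodness.

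For the asymptotics, \eqref{eq:adm} gives $\nu_{\A,h}(0)=\tfrac n2\kappa+O(1)$, and a central binomial coefficient satisfies $\binom{N}{\lfloor N/2\rfloor}=2^N/\poly$ by \eqref{eq:tc}. Hence $K=2^{\kappa n}/\poly$.
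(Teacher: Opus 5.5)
Your proposal is correct and follows the paper's proof essentially step for step. You force $X=D+1$ on the $\B$ blocks and $Y=1-D$ on the $\C$ blocks, and leave the free choice $(0,0)$ versus $(1,1)$ on the zeros of $D$ in the $\A$ blocks. The zero counts $\nu_{\A,h}(-1)+|G_h|$ for $X$ and $\nu_{\A,h}(+1)+|G_h|$ for $Y$ then pin down $|G_h|$, and the central-binomial estimate from \eqref{eq:tc} and \eqref{eq:adm} gives $K=2^{\kappa n}/\poly$. Your one addition is to write out the check on $Y$ and on the ones count, which the paper compresses into ``by the same count with $+1$ and $-1$ interchanged.''
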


\begin{proof}
On $\Pi_{\B,h}$ we have $Y\equiv1$, so $X=D+1$ there, and its type is $\big(\nu_{\B,h}(-1),\nu_{\B,h}(0),\nu_{\B,h}(+1)\big)$ as required by \eqref{eq:lists}; the choice is forced. On $\Pi_{\C,h}$ we have $X\equiv1$, so $Y=1-D$ there, of type $\big(\nu_{\C,h}(+1),\nu_{\C,h}(0),\nu_{\C,h}(-1)\big)$ as required by \eqref{eq:Y}; again forced. On $\Pi_{\A,h}$ both $X_i$ and $Y_i$ lie in $\{0,1\}$ with $X_i-Y_i=D_i$, so $D_i=-1$ forces $(X_i,Y_i)=(0,1)$, $D_i=+1$ forces $(1,0)$, and $D_i=0$ leaves the free choice between $(0,0)$ and $(1,1)$. Writing $G_h$ for the free coordinates of $\Pi_{\A,h}$ assigned $(0,0)$, the number of those with $X_i=0$ is $\nu_{\A,h}(-1)+|G_h|$, which agrees with \eqref{eq:lists} exactly when $|G_h|=\lfloor\nu_{\A,h}(0)/2\rfloor$; and in that case $Y$ has the type required by \eqref{eq:Y} on $\Pi_{\A,h}$, by the same count with $+1$ and $-1$ interchanged. Finally $\binom{\nu_{\A,h}(0)}{\lfloor\nu_{\A,h}(0)/2\rfloor}=2^{\nu_{\A,h}(0)}/\poly$ by \eqref{eq:tc}, and $\nu_{\A,1}(0)+\nu_{\A,2}(0)=\kappa n\pm8$ by \eqref{eq:adm}.
\end{proof}

\begin{lemma}\label{lem:mix}
Let $\Pi$ be good. Every $(X,Y)\in\PP$ satisfies $\ip{X}{y}=\ip{Y}{y}$, and the $K$ integers $\ip{X}{y}$ for $(X,Y)\in\PP$ are pairwise distinct.
\end{lemma}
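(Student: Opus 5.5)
The first claim follows directly from the definitions. For $(X,Y)\in\PP$ we have $X-Y=D$, so
\[ \ip{X}{y}-\ip{Y}{y}=\ip{D}{y}=0, \]
because $D$ annihilates $y$ by \cref{lem:sign} and the choice made in \cref{ssec:balanced}.

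For distinctness, I would use the minimality of $\supp(D)$. Take two distinct pairs $(X,Y),(X',Y')\in\PP$. By \cref{lem:reps} they correspond to distinct sets $G\ne G'\subseteq F$. The pairs agree off $F$, since the proof of \cref{lem:reps} shows that $X$ is forced everywhere outside $F$. On $F$ we have $X_i=0$ for $i\in G$ and $X_i=1$ for $i\in F\setminus G$. Hence $E:=X-X'$ is supported in $F$, with entries in $\{-1,0,1\}$: it equals $+1$ on $G'\setminus G$, $-1$ on $G\setminus G'$, and $0$ elsewhere. In particular $E\ne0$.

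Now suppose $\ip{X}{y}=\ip{X'}{y}$. Then $\ip{E}{y}=0$. But $\supp(E)\subseteq F$ and $F$ consists of coordinates where $D_i=0$, so $\supp(E)$ is disjoint from $\supp(D)$. To contradict minimality I want a nonzero annihilating vector with strictly smaller support than $D$; $E$ itself only has support of size at most $|F|$, which need not be smaller than $|\supp(D)|$. So I would consider $D+E$ and $D-E$ instead. Both lie in $\{-1,0,1\}^n$ because the supports are disjoint, but both have support $\supp(D)\cup\supp(E)$, which is larger, so this does not help directly.

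The right move is to use $E$ alone and compare sizes. By \cref{lem:reps}, $|G|=|G'|$ within each $\Pi_{\A,h}$. Hence $|G\setminus G'|=|G'\setminus G|$, so $E$ is balanced, and $|\supp(E)|\le 2\min(|G|,|F|-|G|)\le|F|$. Moreover $|F|=\nu_{\A,1}(0)+\nu_{\A,2}(0)\approx\kappa n=\alpha n/3$, while $|\supp(D)|=(1-p)n$. Since $\alpha=(9p-1)/2$ and $p<\tfrac27$, we get $\kappa=(9p-1)/6<\tfrac{11}{42}$, while $1-p>\tfrac57$. Thus $|\supp(E)|\le|F|\le\kappa n+8<(1-p)n=|\supp(D)|$ for large $n$. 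So $E$ is a nonzero vector in $\{-1,0,1\}^n$ annihilating $y$ with strictly smaller support than $D$, contradicting the minimality recorded in \cref{ssec:balanced}. Hence the $K$ values are pairwise distinct.

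The main obstacle is this support comparison. I expect it to go through using only the crude bound $|F|\le\kappa n+8$ from \eqref{eq:adm} and the range $p\in(\tfrac17,\tfrac27)$, which gives $\kappa<1-p$ with room to spare.
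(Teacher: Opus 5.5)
Your proof is correct and is essentially the paper's argument: $X-X'$ is a nonzero vector of $\{-1,0,1\}^n$, supported in $F$, that annihilates $y$, and $|F|\le\kappa n+8<(1-p)n=|\supp(D)|$ contradicts the minimality of $\supp(D)$. The detour through $D\pm E$ and the balance of $E$ are unnecessary, and your passing remark that $|F|$ ``need not be smaller'' than $|\supp(D)|$ is refuted by your own computation, since $1-p-\kappa=\tfrac{7-15p}{6}$ is bounded below by a positive constant for $p\in(\tfrac17,\tfrac27)$.
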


\begin{proof}
$\ip{X}{y}-\ip{Y}{y}=\ip{D}{y}=0$. Suppose $(X,Y)$ and $(X',Y')$ in $\PP$ have $\ip{X}{y}=\ip{X'}{y}$. By \cref{lem:reps} they differ on a nonempty subset of $F$, so $X-X'$ is a nonzero vector of $\{-1,0,1\}^n$ with $\ip{X-X'}{y}=0$ and support contained in $F$. But $|F|=\nu_{\A,1}(0)+\nu_{\A,2}(0)\le\kappa n+8$, while $1-p-\kappa=\tfrac{6-6p-9p+1}{6}=\tfrac{7-15p}{6}\ge\tfrac16$ on $\big[\tfrac17,\tfrac27\big]$, so for large $n$ the support of $X-X'$ has fewer than $(1-p)n=|\supp(D)|$ elements, contradicting minimality.
\end{proof}

Writing a solution as a difference of two vectors and drawing its representations from its zero coordinates are due to \cite{MNPW19}, letting both vectors take values in $\{0,1,2\}$ to \cite{RW26}, and the argument of \cref{lem:mix} is that of \cite[Lemma 7.2]{RW26}; what \cref{lem:reps,lem:mix} add is that the free coordinates are confined to the label $\A$, so their density $\kappa$ is smaller than $1-p$.

\subsection{The cost of one iteration}

\begin{proposition}\label{prop:cost}
Let $\nu$ be admissible and $\Pi$ compatible with it. Then $|\XX(\Pi,\nu)|$ and $|\YY(\Pi,\nu)|$ are both $2^{\,n(\alpha+\beta\etafix)}/\poly$, and an iteration of \textsc{Central} that does not abort runs in time and space $O\big(\big(2^{\,n(\alpha+\beta\etafix)/2}+2^{\,n(\alpha+\beta\etafix-\kappa)}\big)\poly\big)=O\big(2^{\,n(\alpha+\beta\etafix-\kappa)}\poly\big)$.
\end{proposition}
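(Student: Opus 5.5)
The plan is to count $\XX(\Pi,\nu)$ block by block from \eqref{eq:lists}, to read off $\YY$ from $\YY(\Pi,\nu)=\XX(\widetilde\Pi,\widetilde\nu)$, and then to charge each step of one iteration either to the sizes of the half-lists or to the two bounds $N_\XX,N_\YY$ that survive step \ref{step:count}.

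\emph{Sizes.} The set $\XX(\Pi,\nu)$ is a product over the six blocks. On $\Pi_{\C,h}$ there is a single choice. On $\Pi_{\A,h}$ the count is a binomial coefficient whose two parts differ from $|\Pi_{\A,h}|/2$ by $O(1)$: indeed $\nu_{\A,h}(\pm1)$ and $\nu_{\A,h}(0)$ are each $\tfrac n2\kappa\pm4$ by \eqref{eq:adm}. By \eqref{eq:tc} this count is therefore $2^{|\Pi_{\A,h}|}/\poly$, with $|\Pi_{\A,h}|=\tfrac n2\alpha\pm12$. On $\Pi_{\B,h}$ the count is the multinomial coefficient of type $(\nu_{\B,h}(-1),\nu_{\B,h}(0),\nu_{\B,h}(+1))$, which is within $O(1)$ of $\tfrac n2\beta(\tfrac49,\tfrac19,\tfrac49)$. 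By \eqref{eq:tc} it is $2^{\frac n2\beta\etafix}/\poly$, and an $O(1)$ perturbation of the type costs only a $\poly$ factor. Multiplying over $h=1,2$ gives $|\XX|=2^{n(\alpha+\beta\etafix)}/\poly$. Since $\widetilde\nu$ is admissible and $\widetilde\Pi$ is compatible with it, the same computation applied to $\XX(\widetilde\Pi,\widetilde\nu)$ gives the same bound for $\YY$. The same computation also shows $|\XX^h|,|\YY^h|=2^{n(\alpha+\beta\etafix)/2}\poly$, because each half contributes exactly half of the exponent.

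\emph{Cost of the steps.} Step \ref{step:partition} and step \ref{step:prime} cost $\poly$, since $K=2^{O(n)}$ and the prime sampler runs in $\poly$ time. Step \ref{step:bucket} enumerates the four half-lists in $\poly$ time per element (enumeration of fixed-type strings with $O(n)$-bit arithmetic) and stores at most that many nonempty buckets, so it costs $2^{n(\alpha+\beta\etafix)/2}\poly$. Step \ref{step:count} sums over the nonempty buckets of $\XX^1$, looking up the complementary bucket in a hash table, so it costs the same. Steps \ref{step:hash} and \ref{step:query} output exactly $N_\XX$ and $N_\YY$ elements, and each join touches only nonempty bucket pairs. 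As the iteration did not abort, both counts are at most $q(n)|\XX|/K$, and by \cref{lem:reps} the value of $K$ computed in \eqref{eq:gK} is $2^{\kappa n}/\poly$ for any admissible layout (the estimate there uses only \eqref{eq:adm}). These steps therefore cost $2^{n(\alpha+\beta\etafix-\kappa)}\poly$ time, and the space is the same.

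\emph{Which term dominates.} It remains to show $(\alpha+\beta\etafix)/2\le\alpha+\beta\etafix-\kappa$, that is, $2\kappa\le\alpha+\beta\etafix$. This holds because $2\kappa=\tfrac23\alpha<\alpha$ and $\beta\etafix>0$. The main obstacle is bookkeeping rather than any single hard step: one has to check that the $\pm4$ slack in \eqref{eq:adm} perturbs each block's type by only $O(1)$, so that every approximation in \eqref{eq:tc} costs at most a $\poly$ factor, and that joining buckets costs time linear in the output, which requires iterating only over nonempty buckets on both sides.
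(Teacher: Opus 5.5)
Your proposal is correct and follows essentially the same route as the paper. It counts $\XX(\Pi,\nu)$ block by block via \eqref{eq:lists} and \eqref{eq:tc}, transfers the bound to $\YY(\Pi,\nu)=\XX(\widetilde\Pi,\widetilde\nu)$, charges steps (iii)--(iv) to the half-lists and steps (v)--(vi) to the abort threshold $q(n)|\XX(\Pi,\nu)|/K$, and checks $2\kappa\le\alpha+\beta\etafix$. Your remark that $K=2^{\kappa n}/\poly$ needs only admissibility of $\nu$, not goodness of $\Pi$, makes explicit a point the paper leaves implicit when it cites \cref{lem:reps} here.
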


\begin{proof}
By \eqref{eq:lists}, $|\XX(\Pi,\nu)|$ is the product over $h$ of the two multinomial coefficients belonging to the blocks $\Pi_{\A,h}$ and $\Pi_{\B,h}$. By \eqref{eq:adm} the first has length $\tfrac{\alpha n}2\pm O(1)$ and type within $O(1)$ of $\tfrac{\alpha n}{4}(1,1,0)$, hence equals $2^{\alpha n/2}/\poly$ by \eqref{eq:tc}; the second has length $\tfrac{\beta n}{2}\pm O(1)$ and type within $O(1)$ of $\tfrac{\beta n}{2}\big(\tfrac49,\tfrac19,\tfrac49\big)$, hence equals $2^{\beta\etafix n/2}/\poly$. Multiplying the four factors gives $2^{n(\alpha+\beta\etafix)}/\poly$. The same computation applies to $\widetilde\nu$ and $\widetilde\Pi$, which are admissible and compatible, so $|\YY(\Pi,\nu)|$ obeys the same estimate.

Steps \ref{step:partition} to \ref{step:count} enumerate the four sets $\XX^h,\YY^h$, each of size $2^{n(\alpha+\beta\etafix)/2}/\poly$ by the same computation applied to one half, and sort them into sparse buckets; step \ref{step:count} needs only bucket sizes, so it precedes any materialisation. Steps \ref{step:hash} and \ref{step:query} emit each surviving vector once, so together they cost $O\big(\big(2^{n(\alpha+\beta\etafix)/2}+N_\XX+N_\YY\big)\poly\big)$, and on an iteration that does not abort $N_\XX$ and $N_\YY$ are at most $q(n)|\XX(\Pi,\nu)|/K=\poly\cdot2^{n(\alpha+\beta\etafix-\kappa)}$ by \cref{lem:reps}. The first term is dominated by the second because $\alpha+\beta\etafix\ge\alpha=3\kappa$, whence $\kappa\le\tfrac13(\alpha+\beta\etafix)$ and therefore $\tfrac12(\alpha+\beta\etafix)\le\alpha+\beta\etafix-\kappa$.
\end{proof}

\subsection{The modular filter}

\begin{lemma}\label{lem:prime}
Let $\sigma_1,\dots,\sigma_K$ be distinct integers of absolute value below $2^{\ell}$ for some $\ell\le\poly$, and assume $K=2^{\Theta(n)}$. For a uniform prime $P\in[K,2K]$, with probability at least $\tfrac12$ the residues $\sigma_j\bmod P$ take at least $K/\poly$ distinct values.
\end{lemma}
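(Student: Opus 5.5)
The plan is to count colliding pairs and apply Cauchy--Schwarz (or the elementary bound on distinct values). Let $C=\#\{(j,j'):j<j',\ P\mid\sigma_j-\sigma_{j'}\}$. If the residues $\sigma_j\bmod P$ take $t$ distinct values with multiplicities $m_1,\dots,m_t$, then $\sum_r m_r=K$ and $\sum_r m_r^2=K+2C$. Cauchy--Schwarz gives $t\ge K^2/(K+2C)$. So it suffices to show that $C\le K\cdot\poly$ with probability at least $\tfrac12$. Then $t\ge K/\poly$.

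To control $C$, bound its expectation over $P$ and apply Markov's inequality. For each fixed pair $j<j'$, the difference $\sigma_j-\sigma_{j'}$ is nonzero, since the $\sigma_j$ are distinct, and has absolute value below $2^{\ell+1}$. A nonzero integer of that size has at most $(\ell+1)/\log_2K$ prime divisors that are at least $K$. By the prime number theorem, the interval $[K,2K]$ contains at least $K/(2\log_2 K)$ primes for large $K$, and $K=2^{\Theta(n)}$ is large. Hence
\[\Pr[P\mid\sigma_j-\sigma_{j'}]\le\frac{(\ell+1)/\log_2K}{K/(2\log_2K)}=\frac{2(\ell+1)}{K}.\]
Summing over fewer than $K^2/2$ pairs gives $\mathbb E[C]\le(\ell+1)K$. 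Markov's inequality then gives $C\le2(\ell+1)K$ with probability at least $\tfrac12$. On that event,
\[t\ge\frac{K^2}{K+4(\ell+1)K}=\frac{K}{4\ell+5}.\]
This is $K/\poly$ because $\ell\le\poly$.

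There is one point of care. The algorithm's prime is drawn by the sampler of \cref{sec:alg}, which conditioned on returning is uniform among the primes of $[K,2K]$, so the computation above applies verbatim. If one wants the density lower bound on primes in explicit form, Chebyshev-type bounds for $\pi(2K)-\pi(K)$ suffice, since only a constant factor is needed. No step here is a real obstacle. The only place needing attention is the prime-counting lower bound, and there I would simply invoke the prime number theorem as the paper already does for the sampler. A refinement to probability close to $1$ is not needed, because the lemma asks only for $\tfrac12$.
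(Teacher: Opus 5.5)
Your proposal is correct and is essentially the paper's proof. Both bound the collision probability of each pair by counting the prime divisors $\ge K$ of the nonzero difference against the number of primes in $[K,2K]$, apply Markov to the number of colliding pairs, and turn that bound into a lower bound on the number of residue classes. Your Cauchy--Schwarz step, which gives the explicit bound $K/(4\ell+5)$, is the same as the paper's convexity bound $Z\ge K^2/(2M)-K/2$.
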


\begin{proof}
Each nonzero difference $\sigma_j-\sigma_{j'}$ has at most $(\ell+1)/\log_2K=\poly$ prime divisors that are at least $K$, while $[K,2K]$ contains at least $K/(2\ln K)$ primes by the prime number theorem, so a fixed pair collides with probability at most $\poly/K$. Hence the number $Z$ of colliding pairs has expectation at most $K\poly$, and $Z\le2K\poly$ with probability at least $\tfrac12$. If the residues fall into $M$ classes then $Z\ge K^2/(2M)-K/2$ by convexity, so $M\ge K/\poly$.
\end{proof}

\begin{proposition}\label{prop:trial}
Condition on $\Pi$ being good. With probability at least $1/\poly$ over the choice of $P$ and $\rho$, the iteration does not abort and returns a solution.
\end{proposition}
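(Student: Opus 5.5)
Conditioned on $\Pi$ being good, the plan is to exhibit two events over $(P,\rho)$, each of probability $1/\poly$, that are compatible with each other. The first is that $\rho$ hits a residue class containing a representation of $D$. The second is that neither $N_\XX$ nor $N_\YY$ exceeds the abort threshold. If both hold, step \ref{step:hash} stores some $X$ with $(X,Y)\in\PP$ and $\ip{X}{y}\equiv\rho$. By \cref{lem:mix}, $\ip{Y}{y}=\ip{X}{y}$, so step \ref{step:query} finds a hit when it queries $Y\in\YY_\rho$. A hit may be a different pair than the one we planted, but by \cref{prop:onesided} every hit returns a vector that passes the check, so it counts as returning a solution.

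For the first event, apply \cref{lem:prime} to the $K$ pairwise distinct integers $\ip{X}{y}$, $(X,Y)\in\PP$, which are distinct by \cref{lem:mix}. They have $O(n)$ bits, since $y$ has $O(n)$-bit entries, and $K=2^{\kappa n}/\poly=2^{\Theta(n)}$ because $\kappa$ is bounded below on the branch. With probability at least $\tfrac12$ over $P$, their residues occupy at least $K/\poly$ classes. Given this, a uniform $\rho$ lands in one of those classes with probability at least $(K/\poly)/P\ge1/\poly$, since $P\le2K$.

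For the abort condition I will use Markov's inequality on $N_\XX=|\{X\in\XX(\Pi,\nu):\ip{X}{y}\equiv\rho\}|$. For any fixed $P$, averaging over uniform $\rho$ gives $\mathbb E_\rho N_\XX=|\XX(\Pi,\nu)|/P\le|\XX(\Pi,\nu)|/K$. The same holds for $\YY$, using $|\YY(\Pi,\nu)|$ from \cref{prop:cost} (the two sizes agree up to $\poly$, and I will check whether the threshold compares $N_\YY$ with $|\XX|$ or $|\YY|$; either way only a $\poly$ factor is at stake, absorbed by choosing $c_0$ large). So for every $P$, $\Pr_\rho[N_\XX>q(n)|\XX|/K]\le\poly/q(n)$, and likewise for $\YY$.

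The main obstacle is combining the two events. The good event for $\rho$ has probability only $1/\poly$, so a crude union bound works only if $q(n)$ dominates that polynomial. This is exactly why $c_0$ is taken sufficiently large: the failure probability $2\poly/q(n)$ of the abort test must be smaller than half the hitting probability $1/\poly$ of the first event. Concretely I will fix $P$ in the good event of \cref{lem:prime}. Then $\Pr_\rho[\text{hit}]\ge1/p_1(n)$ and $\Pr_\rho[\text{abort}]\le p_2(n)/q(n)$, where $p_1$ and $p_2$ are polynomials independent of $c_0$. Choosing $c_0$ so that $q(n)\ge2p_1p_2$ gives $\Pr_\rho[\text{hit and no abort}]\ge1/(2p_1)$. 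Multiplying by $\tfrac12$ for the choice of $P$ yields the claim. The point to verify carefully is that $p_1$ really is independent of $q$, i.e.\ that the bounds in \cref{lem:prime}, \cref{lem:reps} and \eqref{eq:tc} involve only fixed polynomials.
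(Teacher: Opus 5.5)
Your proof is correct and is essentially the paper's: \cref{lem:mix} and \cref{lem:prime} give at least $K/\poly$ occupied residues with probability $\tfrac12$ over $P$, Markov's inequality controls $N_\XX$ and $N_\YY$, and $c_0$ is taken large so that the abort threshold absorbs the polynomial loss in the number of occupied residues. The only difference is cosmetic. You apply Markov over all residues and subtract the abort probability from the hitting probability, whereas the paper applies Markov directly to the average over the occupied set $\mathcal R$. You should also add, as the paper does, that the prime sampler gives up only with probability $2^{-\Omega(n)}$, which is absorbed into the bound.
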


\begin{proof}
By \cref{lem:mix} the $K$ integers $\ip{X}{y}$, $(X,Y)\in\PP$, are distinct, and they are bounded in absolute value by $2^{O(n)}$; moreover $K=2^{\kappa n}/\poly$ with $\kappa\ge\tfrac1{21}$ on $[\tfrac17,\tfrac27]$. So \cref{lem:prime} applies, and with probability at least $\tfrac12$ the set $\mathcal R$ of residues modulo $P$ of these $K$ integers has $|\mathcal R|\ge K/\poly$. Assume this from now on.

For a residue $r$ let $N_\XX(r)$ be the number of $X\in\XX(\Pi,\nu)$ with $\ip{X}{y}\equiv r$, and define $N_\YY(r)$ likewise. Summing over all $r$ gives $|\XX(\Pi,\nu)|+|\YY(\Pi,\nu)|$, which is at most $\poly\cdot|\XX(\Pi,\nu)|$ by \cref{prop:cost}. Averaging over $\mathcal R$ and applying Markov's inequality, at least half of the residues in $\mathcal R$ satisfy $\max\big(N_\XX(r),N_\YY(r)\big)\le\poly|\XX(\Pi,\nu)|/K$, and for $c_0$ large enough such a residue passes the test in step \ref{step:count}, which computes $N_\XX(\rho)$ and $N_\YY(\rho)$ exactly. If moreover $\rho\in\mathcal R$, choose $(X,Y)\in\PP$ with $\ip{X}{y}\equiv\rho$; since $\ip{Y}{y}=\ip{X}{y}$ we get $X\in\XX_\rho$ and $Y\in\YY_\rho$ with the same exact sum, so step \ref{step:query} reports a hit, which is a solution by \cref{prop:onesided} whether or not it is this particular pair. A uniform $\rho$ satisfies both requirements with probability at least $\tfrac12|\mathcal R|/P\ge1/\poly$; the sampler of step \ref{step:prime} gives up with probability $2^{-\Omega(n)}$, absorbed into that bound.
\end{proof}

\subsection{The exponent of the central branch}

\begin{theorem}\label{thm:E}
For $p\in(\tfrac17,\tfrac27)$, with $\alpha,\beta,\kappa$ as in \eqref{eq:par}, $\big(\bh(p)+1-p-\alpha\log_23-2\beta\etafix\big)+\big(\alpha+\beta\etafix-\kappa\big)=\bh(p)+2(1-p)-\log_23$. As a function of $p$ on $[0,1]$, the right-hand side is strictly concave, is maximised only at $p=\tfrac15$, and its value there is $\log_2\tfrac53$.
\end{theorem}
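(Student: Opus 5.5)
The plan is a direct computation in two parts: the algebraic identity, then the one-variable analysis of the right-hand side.

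\emph{The identity.} I would expand the left-hand side and cancel $\bh(p)+1-p$ from both sides. It then remains to show
\[
-\alpha\log_23-\beta\etafix+\alpha-\kappa=1-p-\log_23 .
\]
Substitute \eqref{eq:par}, writing $\beta\etafix=\tfrac{3(1-3p)}{2}\log_23-\tfrac{4(1-3p)}{3}$, and compare separately the coefficients of $\log_23$ and the rational parts.

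The coefficient of $\log_23$ is
\[
-\tfrac{9p-1}{2}-\tfrac{3(1-3p)}{2}=-1 .
\]
For the rational part, use $\kappa=\alpha/3$:
\[
\alpha-\kappa+\tfrac{4(1-3p)}{3}=\tfrac{2\alpha}{3}+\tfrac{4-12p}{3}=\tfrac{9p-1+4-12p}{3}=1-p .
\]
This proves the identity, and it is purely mechanical.

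\emph{Concavity.} Set $f(p)=\bh(p)+2(1-p)-\log_23$ on $[0,1]$. It is the sum of the binary entropy, which is strictly concave on $[0,1]$ (its second derivative is $-1/(p(1-p)\ln2)<0$ on the interior, and it is continuous at the endpoints), and an affine function. Hence $f$ is strictly concave, and any interior critical point is its unique maximiser.

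\emph{Locating the maximum.} On $(0,1)$ the derivative is $f'(p)=\log_2\frac{1-p}{p}-2$, which vanishes exactly when $\frac{1-p}{p}=4$, that is, at $p=\tfrac15$. Since $f'\to+\infty$ as $p\to0^+$ and $f'\to-\infty$ as $p\to1^-$, the endpoints are not maximisers, so $p=\tfrac15$ is the only maximiser.

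\emph{The value.} Compute
\[
\bh(\tfrac15)=\tfrac15\log_25+\tfrac45\log_2\tfrac54=\log_25-\tfrac85 ,
\]
so that $f(\tfrac15)=\log_25-\tfrac85+\tfrac85-\log_23=\log_2\tfrac53$.

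None of these steps is a real obstacle. The only place where care is needed is the bookkeeping of $\beta\etafix$ in the identity, where the coefficient $\tfrac{16}{9}$ has to combine with $\tfrac{3(1-3p)}{4}$ to give exactly $\tfrac{4(1-3p)}{3}$. I would double-check this by evaluating both sides of the identity at one sample point, for instance $p=\tfrac15$.
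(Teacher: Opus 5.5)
Your proposal is correct and follows essentially the same direct computation as the paper. The only difference is cosmetic: the paper handles the $\log_2 3$ terms through $\alpha=1-2\beta$, so that $2\beta\log_23-\beta\etafix=\tfrac{16\beta}{9}$, instead of expanding in $p$ and comparing coefficients; the concavity, the critical point $p=\tfrac15$ and the value $\log_2\tfrac53$ are obtained exactly as you do.
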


\begin{proof}
Since $\kappa=\alpha/3$, the left-hand side is $\bh(p)+1-p+\tfrac23\alpha-\alpha\log_23-\beta\etafix$. Substituting $\alpha=1-2\beta$ turns $-\alpha\log_23$ into $-\log_23+2\beta\log_23$, and
\[ 2\beta\log_23-\beta\etafix=\beta\Big(2\log_23-2\log_23+\tfrac{16}9\Big)=\frac{16\beta}{9},\qquad \frac{2\alpha}{3}+\frac{16\beta}{9}=\frac{9p-1}{3}+\frac{4(1-3p)}{3}=1-p , \]
so the left-hand side equals $\bh(p)+(1-p)-\log_23+(1-p)$. Its derivative in $p$ is $\log_2\frac{1-p}{p}-2$, which vanishes only when $1-p=4p$, and its second derivative is $-1/\big(p(1-p)\ln2\big)<0$. Finally $\bh\big(\tfrac15\big)=\tfrac15\log_25+\tfrac45\log_2\tfrac54=\log_25-\tfrac85$, so the value at $p=\tfrac15$ is $\log_25-\tfrac85+\tfrac85-\log_23=\log_2\tfrac53$.
\end{proof}

\begin{proposition}\label{prop:central}
If $x$ has a solution whose support is of minimum size and whose density of zero coordinates is $p=z/n\in(\tfrac17,\tfrac27)$, then \textsc{Central}$(x,z)$ returns a solution with probability $1-2^{-\Omega(n)}$ and runs in time and space
\[ O\big(2^{\,n(\bh(p)+2(1-p)-\log_23)}\poly\big)=O\big((5/3)^n\poly\big). \]
\end{proposition}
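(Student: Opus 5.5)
The proof assembles the success probability and the running time from the results already established.

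\emph{Success probability.} Let $D^\star$ be the minimum-support solution with zero density $p$.
\begin{enumerate}
\item Consider one outer repetition of \textsc{Central}. Its $\varepsilon$ satisfies \eqref{eq:bal} with probability at least $c/\sqrt n$. Over the $q(n)$ outer repetitions, some $\varepsilon$ satisfies it except with probability $(1-c/\sqrt n)^{q(n)}=2^{-\Omega(n)}$, once $c_0$ is large.
\item Fix such an $\varepsilon$, with $D$ and $y$ as in \cref{ssec:balanced}. By \cref{lem:layout} some admissible layout $\nu$ matches the counts $n_d$. \textsc{Central} enumerates all admissible layouts, so it reaches this $\nu$.
\item Consider one inner iteration for this $\nu$. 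By \cref{prop:good}, $\Pi$ is good with probability at least $2^{-nE_1}/\poly$, where $E_1=\bh(p)+1-p-\alpha\log_23-2\beta\etafix$. Conditioned on $\Pi$ being good, \cref{prop:trial} shows the iteration returns a solution with probability at least $1/\poly$.
\item So each inner iteration succeeds with probability at least $2^{-nE_1}/q'(n)$ for a fixed polynomial $q'$. The inner loop runs $q(n)\lceil2^{nE_1}\rceil$ times, so all iterations fail with probability at most $\exp(-q(n)/q'(n))$.
\item For this to be $2^{-\Omega(n)}$ we need $q(n)\ge n\,q'(n)$. This holds once $c_0$ exceeds the degrees of the polynomials hidden in \cref{prop:good,prop:trial}. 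Those polynomials come from type-class estimates and prime counting and do not depend on $c_0$, so the choice of $c_0$ is not circular.
\item Any returned vector passes the check by \cref{prop:onesided}, so \textsc{Central} returns a solution with probability $1-2^{-\Omega(n)}$.
\end{enumerate}

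\emph{Running time.} There are $q(n)$ outer repetitions, $O(1)$ admissible layouts, and $q(n)\lceil2^{nE_1}\rceil$ inner iterations per layout.
\begin{enumerate}
\item An inner iteration that does not abort costs $O(2^{nE_2}\poly)$, where $E_2=\alpha+\beta\etafix-\kappa$, by \cref{prop:cost}.
\item An aborting iteration stops after step \ref{step:count}. By then it has only enumerated and bucketed the half-lists, which costs $2^{n(\alpha+\beta\etafix)/2}\poly\le2^{nE_2}\poly$ by the same proposition.
\item Drawing $\Pi$, $P$ and $\rho$ costs $\poly$. Sampling a uniform compatible partition is routine.
\item The total time is therefore $2^{n(E_1+E_2)}\poly$. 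By \cref{thm:E}, $E_1+E_2=\bh(p)+2(1-p)-\log_23\le\log_2\tfrac53$, which gives the stated $O((5/3)^n\poly)$ bound.
\item Space is reused across iterations, so it is bounded by the cost of a single iteration.
\end{enumerate}

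\emph{Main obstacle.} The delicate step is the bookkeeping of polynomial factors in step 5 of the success argument. The polynomial losses in \cref{prop:good} and \cref{prop:trial} are fixed and independent of $c_0$, and the threshold $q(n)|\XX|/K$ used in \cref{prop:trial} is already built into that proposition. So taking $c_0$ large makes $q(n)/q'(n)\ge n$, and this yields the $2^{-\Omega(n)}$ failure bound.
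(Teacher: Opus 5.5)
Your proposal is correct and follows the paper's proof essentially step for step. The running time counts the $q(n)\cdot O(1)\cdot q(n)\lceil 2^{nE_1}\rceil$ iterations, charges each by \cref{prop:cost} (aborting iterations cost less), and combines the exponents via \cref{thm:E}; correctness combines the balanced $\varepsilon$, \cref{lem:layout}, \cref{prop:good} and \cref{prop:trial}, then takes $c_0$ large. Your remarks on why choosing $c_0$ is not circular, and on what an aborting iteration costs, only make explicit what the paper leaves implicit.
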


\begin{proof}
The three nested loops perform
\[ q(n)\cdot O(1)\cdot q(n)\Big\lceil2^{\,n\left(\bh(p)+1-p-\alpha\log_23-2\beta\etafix\right)}\Big\rceil \]
iterations, each costing $O\big(2^{n(\alpha+\beta\etafix-\kappa)}\poly\big)$ by \cref{prop:cost}, an iteration that aborts costing strictly less; the running time claim is then Theorem \ref{thm:E}. For correctness, except with probability $2^{-\Omega(n)}$ some outer repetition produces $\varepsilon$ satisfying \eqref{eq:bal}, and \cref{lem:layout} supplies a matching admissible layout among those enumerated. Each subsequent iteration independently draws a good $\Pi$ with probability at least $2^{-n(\bh(p)+1-p-\alpha\log_23-2\beta\etafix)}/\poly$ by \cref{prop:good} and then succeeds with probability at least $1/\poly$ by \cref{prop:trial}, so for $c_0$ large enough the failure probability over all the iterations is $2^{-\Omega(n)}$.
\end{proof}

\subsection{The other two branches, and the main theorem}\label{ssec:other}

\begin{proposition}\label{prop:outer}
Let $x$ have a solution whose support is of minimum size, with $z$ zero coordinates and $p=z/n$. Then \textsc{Mitm}$(x,z)$ runs in time and space $O\big(2^{\,n(\bh(p)+1-p)/2}\poly\big)$, and, provided $w=\min\{z,n-z-1\}=\Omega(n)$, \textsc{Collide}$(x,z)$ runs in time and space $O\big(2^{\,n\max\{1/2,\,1-p,\,p\}}\poly\big)$. Each returns a solution with probability $1-2^{-\Omega(n)}$.
\end{proposition}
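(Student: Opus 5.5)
I will treat the two branches separately, in each case first bounding the running time from the sizes of the enumerated lists and then arguing success from the existence of the minimum-support solution $D^\star$ with $z$ zeros.

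\textbf{\textsc{Mitm}.} There are $\poly$ pairs of types. For a type $\nu^h$ within $O(1)$ of $\tfrac{|H_h|}{n}(n_{-1},n_0,n_1)$, the enumeration of $\{-1,0,1\}^{H_h}$ of type $\nu^h$ has size $\binom{|H_h|}{\nu^h}\le 2^{|H_h|\Ent(\nu^h)}$. Because the two types agree up to $1$ entrywise and their zero counts sum to $z$, each is within $O(1)$ of $\tfrac n2(\tfrac{1-p'}2\pm\cdot,p,\cdot)$. I would bound the general case crudely: the sizes of the two lists are at most $2^{\frac n2\Ent(\mu)}\poly$, where $\mu$ is the averaged type, and by concavity of entropy $\Ent(\mu)\le\Ent(\tfrac{1-p}2,p,\tfrac{1-p}2)=\bh(p)+1-p$. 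Hashing and querying are linear in the list sizes times $\poly$, which gives the stated bound.

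For success, fix a uniform halving and let $\nu^h$ be the type of $D^\star|_{H_h}$. The halving is uniform, so the number of each symbol landing in $H_1$ is hypergeometric. With probability $1/\poly$, the counts of $-1$, $0$ and $+1$ in $H_1$ are each within $1$ of half the total; I would get this by a local-limit bound for multivariate hypergeometrics, or by a direct ratio of multinomials via \eqref{eq:tc}. Then the pair $(\nu^1,\nu^2)$ is enumerated, and $D^\star=(D^1,D^2)$ produces a hit, which is nonzero. Any hit is a valid solution. With $q(n)$ repetitions, failure has probability $2^{-\Omega(n)}$. The mild obstacle is the parity bookkeeping when $n$ or $n_d$ is odd; "agree up to $1$" is designed to absorb it.

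\textbf{\textsc{Collide}.} This is the classical pigeonhole/collision argument. Bucketing each half costs $2^{n/2}\poly$, and a non-aborting iteration materialises at most $q(n)2^{n-w}$ sets. Since $w=\min(z,n-z-1)$, we have $n-w=\max(n-z,z+1)$, that is $2^{n\max(1-p,p)}\cdot 2$, which gives the time bound together with the $2^{n/2}$ term.

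The expected $N$ over $\rho$ equals $2^n/P\le 2^{n-w}$, so by Markov's inequality a uniform $\rho$ aborts with probability at most $1/q(n)$.

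For success, let $S,T$ be the sets from \hyperref[lem:enc-a]{Lemma~\ref*{lem:enc-a}} for $D^\star$, and let $U$ be the zero coordinates of $D^\star$, with $|U|=z$. For every $V\subseteq U$, the pair $S\cup V$, $T\cup V$ is a colliding pair. The $2^z$ sums $\Sigma_{S\cup V}$ are pairwise distinct, since otherwise two distinct subsets of $U$ would have equal sums, giving a solution with support inside $U$ of size $z<n-z$, contradicting minimality. This argument needs $z<(1-p)n$; when $z\ge n-z$, I would instead use the minimality bound $|\supp(V\triangle V')|\le z$ and handle it by choosing the representation family of size $2^{n-z-1}$ from subsets of $\supp$. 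This case split is why $w=\min\{z,n-z-1\}$ appears, and I expect making it precise to be the main obstacle.

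With $2^w$ distinct sums and $P\in[2^w,2^{w+1}]$, \cref{lem:prime} gives $2^w/\poly$ distinct residues with probability $\tfrac12$. Hence a uniform $\rho$ hits one with probability $1/\poly$, and the materialised set then contains a collision. It is found by exact hashing and is a valid solution by \cref{prop:onesided}. With $q(n)$ repetitions, failure has probability $2^{-\Omega(n)}$, using $w=\Omega(n)$ so that \cref{lem:prime} applies.
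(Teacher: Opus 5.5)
There is a genuine gap in the correctness argument for \textsc{Collide} when $z\ge n-z$, i.e.\ $p\ge\tfrac12$. This is not a corner case: \textsc{Core} invokes \textsc{Collide} for every $p\in[\tfrac27,\tfrac23)$.

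The rest of your proof holds up. Your \textsc{Mitm} argument is essentially the paper's: for a fixed zero count, your entropy bound is the paper's estimate $\binom{N-\nu^h(0)}{\nu^h(+1)}\le 2^{N-\nu^h(0)}$. Your running-time analysis of \textsc{Collide} is also correct. Your abort analysis works too: apply Markov's inequality over a uniform $\rho$, then take a union bound with the event that $\rho$ is one of the residues occupied by the hidden sums. This is legitimate because the $\poly$ loss in \cref{lem:prime} does not depend on $c_0$.

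Your remedy for $p\ge\tfrac12$ does not work. You propose taking the $2^{n-z-1}$ representations from subsets of $\supp(D^\star)$. But for $i\in S$ we have $S\cup\{i\}=S$, while $T\cup\{i\}$ gains $x_i$. So adding support coordinates to both sets destroys the equality of sums, and the resulting pairs are no longer solutions. Your ``minimality bound $|\supp(V\triangle V')|\le z$'' also points the wrong way. Minimality gives the lower bound $|\supp|\ge n-z$ for every solution. What you need is an upper bound $|V\triangle V'|\le n-z-1$, so that a collision would contradict it.

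The missing idea is to shrink the free set rather than relocate it. Fix any $W\subseteq U$ of the zero coordinates with $|W|=w=\min\{z,n-z-1\}\le z$, and use only $V\subseteq W$.
\begin{itemize}
\item As before, $S\cup V\ne T\cup V$ and the two sets have equal sums.
\item Suppose two distinct $V,V'\subseteq W$ had equal sums. Then Lemma~\ref{lem:enc-a} gives a solution supported in $V\triangle V'\subseteq W$.
\item That support has size at most $w\le n-z-1<n-z=|\supp(D^\star)|$, contradicting minimality.
\end{itemize}
This yields exactly $2^{w}$ pairwise distinct hidden sums, matching $K=2^w$ and $P\in[2^w,2^{w+1}]$ in \cref{lem:prime}. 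It covers both regimes at once, so no case split is needed; this is what the paper does.
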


\begin{proof}
For \textsc{Mitm}, a uniform halving splits each of the three classes of $D^\star$ evenly up to $1$ with probability at least $1/\poly$, and then the types of $D^\star|_{H_1}$ and $D^\star|_{H_2}$ agree entrywise up to $1$ and their numbers of zeros sum to $z$, so this pair is enumerated and the two halves meet in the hash table; the $q(n)$ repetitions amplify this to $1-2^{-\Omega(n)}$. For the running time, write $N=|H_h|$ and split the multinomial coefficient counting the set enumerated for $\nu^h$ as $\binom{N}{\nu^h(0)}\binom{N-\nu^h(0)}{\nu^h(+1)}\le\binom{N}{\nu^h(0)}2^{\,N-\nu^h(0)}$. Here $N=\tfrac n2\pm1$ and $\nu^h(0)=\tfrac z2\pm1$, so the first factor is at most $2^{\,n\bh(p)/2}\poly$ by \eqref{eq:tc} and the second is $2^{\,(1-p)n/2+O(1)}$, whence the product is $2^{\,n(\bh(p)+1-p)/2}\poly$.

For \textsc{Collide}, let $(A,B)$ be the pair of sets corresponding to $D^\star$ and let $W$ consist of $w$ of its zero coordinates. For every $T\subseteq W$ the sets $A\cup T$ and $B\cup T$ are distinct and have equal sums, and the $2^{w}$ integers $\sum_{i\in A}x_i+\sum_{i\in T}x_i$ are pairwise distinct: two coinciding ones would give, by \hyperref[lem:enc-a]{Lemma~\ref*{lem:enc-a}}, a solution of support at most $w\le n-z-1<|\supp(D^\star)|$, contradicting minimality. These $2^{w}$ integers play the role of the hidden sums in \cref{prop:trial}, and $w=\Omega(n)$ by hypothesis, so \cref{lem:prime} with $K=2^{w}$ applies: with probability at least $\tfrac12$ they occupy at least $2^{w}/\poly$ residues modulo $P$. Averaging the $2^{n}$ subsets of $[n]$ over those residues and applying Markov's inequality leaves at least half of them carrying at most $2^{\,n-w}\poly$ subsets, which for $c_0$ large enough passes the abort test; a uniform $\rho$ is such a residue with probability at least $1/\poly$ (its sampler gives up with probability $2^{-\Omega(n)}$, absorbed into this bound), and then some $T\subseteq W$ satisfies $\sum_{i\in A}x_i+\sum_{i\in T}x_i\equiv\rho$, so that $A\cup T$ and $B\cup T$ both lie in the materialised set and share their exact sum. The cost is $O\big((2^{n/2}+2^{\,n-w})\poly\big)$ and $n-w=\max\{n-z,\,z+1\}$.
\end{proof}

\begin{theorem}\label{thm:main}
Let $x\in\mathbb Z^n$ have entries of at most $m\le2^n$ bits. Then ESS$(x)$ never outputs a non-solution, outputs a solution with probability $1-2^{-\Omega(n)}$ whenever one exists, and runs in time $O\big((5/3)^n\poly+n^2m\big)$ and space $O\big((5/3)^n\poly+nm\big)$.
\end{theorem}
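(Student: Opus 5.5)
The plan is to assemble \cref{thm:main} from the results already proved, in three parts: soundness, completeness, and cost. Throughout, write $n'=n+k$ with $k=\lceil\log_2n\rceil$ for the length of $\widetilde x$. By \cref{lem:reduce}, $\widetilde x$ has nonnegative entries of $O(n)$ bits, so every arithmetic operation inside \textsc{Core}$(\widetilde x)$ costs $\poly$. Also $(5/3)^{n'}=(5/3)^n\poly$.

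\emph{Soundness and completeness.} Soundness is exactly \cref{prop:onesided}: ESS checks $D\ne0$ and $\ip{D}{x}=0$ against the original $x$, so it never outputs a non-solution. For completeness, suppose $x$ has a solution.
\begin{enumerate}
\item By \hyperref[lem:reduce-a]{Lemma~\ref*{lem:reduce-a}}, $\widetilde x$ has a solution. The sampler for $P$ gives up only with probability $2^{-\Omega(n)}$.
\item Fix a nonzero $D^\star$ of minimum support annihilating $\widetilde x$, with $z$ zeros and $p=z/n'$. \textsc{Core} eventually reaches this $z$, unless it halts earlier with some verified vector, which is also acceptable.
\item At this $z$ it runs the branch determined by $p$. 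That branch succeeds with probability $1-2^{-\Omega(n)}$: by \cref{prop:central} if $p\in(\tfrac17,\tfrac27)$, and by \cref{prop:outer} otherwise.
\item \textsc{Collide} needs $w=\Omega(n)$. This holds on its range $\tfrac27\le p<\tfrac23$, since then $z\ge 2n'/7$ and $n'-z-1\ge n'/3-1$.
\item Whatever vector \textsc{Core} returns is a solution of $\widetilde x$, encoded through \cref{lem:enc-a}. By \hyperref[lem:reduce-b]{Lemma~\ref*{lem:reduce-b}}, outside an event of probability $2^{-\Omega(n)}$ over $P$, its restriction to $[n]$ is a solution of $x$. So it is nonzero, passes the check in ESS, and is printed.
\item A union bound over these $O(1)$ failure events gives success probability $1-2^{-\Omega(n)}$.
\end{enumerate}

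\emph{Running time.} \textsc{Core} runs one branch for every $z<n'$, not only the correct one, so I need a bound for each branch that does not use the existence of a solution. The cost arguments in \cref{prop:cost,prop:outer} in fact never use it:
\begin{itemize}
\item \textsc{Mitm} only enumerates types, so its cost is $2^{n(\bh(p)+1-p)/2}\poly$.
\item \textsc{Collide} is capped by its abort test, at $(2^{n/2}+2^{n-w})\poly$.
\item For \textsc{Central}, \cref{prop:cost} holds for every admissible layout and compatible partition. Combined with the iteration count and \cref{thm:E}, this gives $2^{n(\bh(p)+2(1-p)-\log_23)}\poly\le(5/3)^n\poly$.
\end{itemize}
It remains to check the numerical thresholds.
\begin{itemize}
\item The function $\bh(p)+1-p$ is maximised at $p=\tfrac13$, so on $p\le\tfrac17$ it is at most its value at $\tfrac17$, about $1.449$. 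On $p\ge\tfrac23$ it is at most its value at $\tfrac23$, about $1.252$. Halving gives at most $0.725<\log_2\tfrac53\approx0.737$.
\item For \textsc{Collide} on $[\tfrac27,\tfrac23)$, $\max\{1/2,1-p,p\}\le\tfrac57\approx0.714$. Here $2^{n-w}=2^{\max\{n-z,z+1\}}$ contributes at most a factor $2$ beyond this.
\end{itemize}
Summing over the $n'$ values of $z$ costs one more factor of $\poly$. Adding the $O(n^2m+\poly)$ reduction from \cref{lem:reduce} and the $O(nm)$-cost final check $\ip{D}{x}=0$ gives the time bound. Space is $(5/3)^n\poly$ for the branches, since time bounds space, plus $O(nm)$ to store $x$.

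\emph{Main obstacle.} I expect the delicate step to be the uniform cost bound for every $z$, not the correctness chain. One must make sure that \textsc{Collide}, \textsc{Central} and \textsc{Mitm} are bounded even when no solution with that $z$ exists. The abort tests and the type-counting arguments give this, and the branch boundaries $\tfrac17,\tfrac27,\tfrac23$ must be checked numerically against $\log_2\tfrac53$, which I would do by monotonicity of $\bh(p)+1-p$ on each side of $\tfrac13$. A smaller point is that $p=z/n'$ rather than $z/n$. This changes the exponents only by $O(\log n/n)$, which is absorbed into $\poly$.
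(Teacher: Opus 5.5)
Your proposal is correct and follows the paper's proof essentially step for step. Soundness comes from \cref{prop:onesided}; completeness goes through \cref{lem:reduce}(a), a minimum-support solution of $\widetilde x$, \cref{prop:central,prop:outer} and \cref{lem:reduce}(b); and the running time uses the same per-branch exponent table, a factor $n$ for the loop over $z$, and the additive cost of the reduction. You also state explicitly that each branch's cost bound holds whether or not a solution with that particular $z$ exists, which the paper leaves implicit; that remark is correct.
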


\begin{proof}
The first claim is \cref{prop:onesided}. Suppose $x$ has a solution. Then $\widetilde x$ has one by \hyperref[lem:reduce-a]{Lemma~\ref*{lem:reduce-a}}; fix one of minimum support, so that \textsc{Core} reaches its number $z$ of zero coordinates and invokes the branch selected by $p=z/n$, which succeeds with probability $1-2^{-\Omega(n)}$ by \cref{prop:central,prop:outer}. By \hyperref[lem:reduce-b]{Lemma~\ref*{lem:reduce-b}} the restriction to $[n]$ of the vector returned then passes the check of ESS, except on the events, of total probability $2^{-\Omega(n)}$, that the sampler of \cref{lem:reduce} gives up, whereupon \textsc{ESS} outputs \textsf{NO}, or that its prime is bad. For the running time, the derivative of $\tfrac12(\bh(p)+1-p)$ is $\tfrac12\big(\log_2\tfrac{1-p}p-1\big)$, so it increases on $[0,\tfrac13]$ and decreases on $[\tfrac13,1]$, and we obtain
\[ \begin{array}{lll} p\le\tfrac17: & \textsc{Mitm}: & \tfrac12(\bh(p)+1-p)\le\tfrac12\big(\bh(\tfrac17)+\tfrac67\big)=0.7244078\ldots\\[2pt] \tfrac17<p<\tfrac27: & \textsc{Central}: & \bh(p)+2(1-p)-\log_23\le\log_2\tfrac53=0.7369655\ldots\\[2pt] \tfrac27\le p<\tfrac23: & \textsc{Collide}: & \max\{\tfrac12,1-p,p\}\le\max\{\tfrac12,\tfrac57,\tfrac23\}=\tfrac57=0.7142857\ldots\\[2pt] p\ge\tfrac23: & \textsc{Mitm}: & \tfrac12(\bh(p)+1-p)\le\tfrac12\big(\bh(\tfrac23)+\tfrac13\big)=0.6258145\ldots \end{array} \]
using Theorem \ref{thm:E} on the second line. On the third line the hypothesis of \cref{prop:outer} holds, since there $w\ge\min\{\tfrac27n,\tfrac13n\}-1=\Omega(n)$. Every branch of \textsc{Core} is therefore $O\big((5/3)^n\poly\big)$; the $n$ values of $z$ cost a further factor of $n$, the $O(\log n)$ entries appended by \cref{lem:reduce} a factor $(5/3)^{O(\log n)}=\poly$, and that lemma an additive $O(n^2m+\poly)$, whose $\poly$ the exponential term absorbs. Finally \textsc{Core} uses $O\big((5/3)^n\poly\big)$ space, which absorbs the $O(n^2)$ bits occupied by $\widetilde x$, while the verification against $x$ needs $O(m+\log n)$ bits, within the $O(nm)$ of the instance itself.
\end{proof}

\end{document}